\titlespacing*{\section}{0pt}{*2}{*1}
\titlespacing*{\subsection}{0pt}{*2}{*1} 
\setlist{noitemsep, topsep=0pt} 
\definecolor{LinkBlue}{rgb}{.15, .25, .85} 
\renewcommand*{\NAT@spacechar}{~}
\providecommand{\algorithmname}{Algorithm}
\newtheorem{theorem}{Theorem}
\newtheorem{example}{Example}
\newtheorem{remark}{Remark}
\newtheorem{corollary}{Corollary}
\newtheorem{definition}{Definition}
\newtheorem{prop}{Proposition}
\def \bP {\mathbb{P}}
\def \bE {\mathbb{E}}
\newcommand{\ceil}[1]{{\left\lceil {#1} \right \rceil}}
\newcommand{\TV}{{\sf TV}}
\newcommand{\diff}{\mathrm{d}}
\newcommand{\Indc}{\mathbf{1}}
\newcommand{\calA}{{\mathcal{A}}}
\newcommand{\calD}{{\mathcal{D}}}
\newcommand{\calO}{{\mathcal{O}}}
\newcommand{\calS}{{\mathcal{S}}}
\newcommand{\barP}{{\bar{P}}}
\newcommand{\rMLMC}{\mathsf{rMLMC}}
\newcommand{\uMCMC}{\mathsf{uMCMC}}
\newcommand{\MCMC}{\mathsf{MCMC}}
\begin{document}
	\title{When are Unbiased Monte Carlo Estimators More Preferable than Biased Ones?}
 \author{Guanyang Wang, Jose Blanchet,  Peter W.Glynn}
 \maketitle
\begin{abstract}
    Due to the potential benefits of parallelization, designing unbiased Monte Carlo estimators, primarily in the setting of randomized multilevel Monte Carlo, has recently become very popular in operations research and computational statistics. However, existing work primarily substantiates the benefits of unbiased estimators at an intuitive level or using empirical evaluations. The intuition being that unbiased estimators can be replicated in parallel enabling fast estimation in terms of wall-clock time. This intuition ignores that, typically, bias will be introduced due to impatience because most unbiased estimators necesitate random completion times. This paper provides a mathematical framework for comparing these methods under various metrics, such as completion time and overall computational cost. Under practical assumptions, our findings reveal that unbiased methods typically have superior completion times — the degree of superiority being quantifiable through the tail behavior of their running time distribution — but they may not automatically provide substantial savings in overall computational costs. We apply our findings to Markov Chain Monte Carlo and Multilevel Monte Carlo methods to identify the conditions and scenarios where unbiased methods have an advantage, thus assisting practitioners in making informed choices between unbiased and biased methods. 
\end{abstract}

\section{Introduction}

Due to the potential of parallelization, designing unbiased Monte Carlo estimators has recently received much attention. Among existing methods, two categories of algorithms stand out: the unbiased Markov chain Monte Carlo (MCMC) \citep{glynn2014exact,jacob2020unbiased,heng2019unbiased,middleton2020unbiased}, and unbiased Multilevel Monte Carlo (MLMC) \citep{rhee2015unbiased, blanchet2015unbiased, blanchet2019unbiased, vihola2018unbiased}.  The unbiased MCMC and MLMC are considered the unbiased counterparts to their biased MCMC \citep{brooks2011handbook} and MLMC \citep{heinrich2001multilevel, giles2015multilevel} equivalents. They have found widespread applications in operations research, statistics, and machine learning, including MCMC convergence diagnosis \citep{biswas2019estimating}, stochastic optimization \citep{blanchet2019unbiased,asi2021stochastic}, optimal stopping \citep{zhou2021unbiased} and so on.

While the advantages of these unbiased estimators for parallelization are intuitively clear, and recent studies such as \cite{nguyen2022many, wang2022unbiased} have presented empirical evidence supporting these benefits, there still lacks a systematic comparison between these unbiased Monte Carlo methods and their biased counterparts within a mathematical framework. Establishing such a framework involves two complexities. 

Firstly, we need to define a metric that is pragmatic for practitioners. For instance, in situations where users are acquiring parallel processors based on the on-demand pricing model from vendors like Amazon or Microsoft, a suitable metric might be the total monetary cost, which is proportional to the total computational cost across all processors. In many other cases, e.g., when the users have an urgent need for the computational result, the completion time would be a better metric. Secondly, almost all unbiased Monte Carlo methods exhibit a random yet finite computational cost per implementation, contrasting with traditional methods that usually have a deterministic cost. This inherent randomness can substantially impact our comparison. To illustrate, if we choose completion time as our metric, the completion time for a conventional method would be the same across all processors (in equal environments), but each unbiased estimator generally carries a random cost (i.e. a random termination time). Hence, the comparative result, utilizing the completion time of the slowest processor, may vary from that employing the average completion time across all processors.

The main contribution of this paper is a systematic comparison between the unbiased and biased Monte Carlo methods in the massively parallel regime under the metrics mentioned above, with a particular focus on the unbiased MCMC and unbiased MLMC methods since they have already been used in practice. After fixing an additive desired precision level $\epsilon$, we compare the behavior of these two types of algorithms based on their scaling to $\epsilon$. Under practical assumptions, unbiased algorithms generally have a favorable comparison to their biased counterparts concerning both worst and average completion time. The extent of this advantage is quantifiably linked to the tail behavior of the running time needed to implement the unbiased algorithm once. Conversely, unbiased estimators do not offer savings in total computational cost compared to their biased equivalents. However, both methodologies frequently attain the same order of magnitude on  $\epsilon$. Consequently, their differences are mostly evident in constants. In summary, our results not only validate the circumstances where these unbiased methods are advantageous for parallel implementation but also clarify the cases in which unbiased methods do not have such an advantage. We hope these results shed some light on which estimator to use in practice. 

To intuitively understand the benefits of unbiased Monte Carlo methods regarding completion time let us use the setting of MLMC as a comparison environment. When applying a standard (i.e. biased) MLMC strategy users set a precision level, say \( \epsilon \), and select the number of levels so that the combined squared bias and variance do not exceed \( \epsilon^2 \). In contrast, unbiased estimators employ a distinct strategy to control bias and variance independently. Typically, an efficient unbiased Monte Carlo algorithm requires a random time (with finite expectation) in order to produce an estimator with zero bias and finite variance. Users then manage the variance by independent replications of the unbiased estimators. Averaging the outcomes of \( N \) independent runs will result in a variance that is \( \frac{1}{N} \) of a single estimator's. Consequently, the MSE of the concluding estimator will linearly decrease with the number of repetitions. Given that unbiased Monte Carlo estimators cleanly separate the issue of bias from the issue of variance, the intuition indicates that they are inherently more suitable for parallel implementation compared to their biased counterparts. This intuition will be quantified in the following three sections. Our quantitative results are illustrated both in the context of MLMC and MCMC applications.

In the context of multilevel Monte Carlo applications, the main insights can be summarized as follows. We consider three estimators. First, the standard (biased) multilevel Monte Carlo estimator studied in \citep{giles2008multilevel, giles2015multilevel} and reviewed in Section \ref{sec:MLMC}. This estimator is not designed to be applied directly to a parallel computing environment. The second estimator is a natural adaptation of the standard estimator designed to minimize completion time so that the workload is distributed in such a way that every parallel processor is treated equally. The third estimator is the unbiased randomized multilevel Monte Carlo estimator (rMLMC) introduced and studied in \citep{rhee2015unbiased}. It is well known (also discussed in Section \ref{sec:MLMC}) that in terms of expected computational cost (when the cost achieves the canonical $1/\epsilon^2$ rate, see Theorem \ref{thm:MLMC}), the standard estimator (first case) and the rMLMC estimator (third case) possess comparable behavior as a function of the desired precision level $\epsilon$. The second estimator (i.e. practical parallel variant of the standard estimator) is significantly less efficient (i.e. the computational cost grows higher than $1/\epsilon^2$) than the other two. On the other hand, in terms of expected completion time, it turns out that the second estimator is superior to the third estimator (rMLMC), which in turn is superior to the standard estimator (i.e. the first estimator). However, we also study a variation of the rMLMC estimator, which introduces a controlled bias due to the use of a suitable time truncation, such variation matches the expected completion time growth of the second estimator without sacrificing the performance in the growth of the expected computational cost. This feature (i.e. the use of a suitable truncation in the design of rMLMC) highlights additional design aspects of important consideration that are briefly discussed in Section \ref{sub:practical_MLMC}.

Similarly, in the context of MCMC for geometrically ergodic Markov chains, we study the advantages of unbiased MCMC strategies. We conclude that in this setting, due to the geometric ergodicity and practical considerations in terms of various tuning parameters for algorithmic design, unbiased MCMC provide significant advantages to biased estimators. The conclusions, reported in Section \ref{sec:MCMC} follow from our results in earlier sections. 

The rest of this paper is structured as follows: Section \ref{sec:cost_analysis} establishes the mathematical framework and explores the total cost and completion time required for a generic algorithm to achieve $\epsilon^2$-MSE. Section \ref{subsec:unbias_analysis} clarifies the relationship between completion time and the tail behavior for unbiased methods. Section \ref{sec:MLMC} and \ref{sec:MCMC} apply the general results in earlier sections to MLMC and MCMC respectively, as indicated earlier. Section \ref{sec:numerical} provides a numerical example to substantiate our theoretical findings. The paper concludes with a brief discussion in Section \ref{sec:discussion}.

While the primary focus of the paper is to establish a mathematical framework for comparing unbiased versus biased methods, some remarks on practical implementations are also included at the end of Section \ref{sec:MLMC} and \ref{sec:MCMC} to complement our results.

\section{Cost Analysis under different metrics}\label{sec:cost_analysis}

\subsection{Problem Formulation}
In a given probability space $(\Omega, \mathcal F, \mathbb P)$, we are interested in a quantity that can be expressed as a functional of the probability measure, denoted as $\mathcal T(\mathbb P)$. We consider a randomized algorithm, $\mathcal A$, which takes an input $x$ and produces a random variable, $\mathcal A(x)$. \footnote{Throughout the paper, the output of an algorithm or the resulting Monte Carlo estimator will refer to the same concept.} This output serves as an estimate of $\mathcal T(\mathbb P)$. When the input is irrelevant or absent for our discussion, we simply denote the algorithm as $\calA$.  The computational expense of producing a single replication of an estimator (i.e. running the algorithm once) is given by $C(\calA)$. We measure $C(\calA)$ in terms of random seeds needed to run $\calA$. We recognize that our model of computation is somewhat vague, however, our objective is to compare the performance of unbiased estimators with their natural counterparts in practical settings (as discussed in Sections \ref{sec:MLMC} and \ref{sec:MCMC}). In these contexts there will be no ambiguity and our results will yield precise comparisons as long as the computational cost is measured consistently. We allow $C(\mathcal{A})$ to be a random variable, particularly when considering algorithms that terminate at random completion times. Observe that many algorithms, like MCMC, have an iterative structure requiring an initial input, say $x$. In the context of such iterative algorithms, we employ \( \mathcal{A}:=\mathcal{A}(x,n) \) to signify the output following \( n \) iterations and use \( C(\mathcal{A},n) \) to denote the associated computational cost.

We also assume that users have access to a sufficient number of processors. This allows them to generate independent and identically distributed (i.i.d.) random variables by executing the same algorithm independently across different processors. If the algorithm $\mathcal A$ is run on $m$ processors, the expected (total) \textit{computational cost} can be defined as the sum $\mathbb E[C(\mathcal{A}_1) + C(\mathcal{A}_2) + \ldots + C(\mathcal{A}_m)] = m \mathbb E[C(\mathcal{A}_1)]$. Here, $C(\mathcal {A}_i)$ represents the cost incurred on the $i$-th processor. 

We also consider the  worst-case and average-case \textit{completion times}, denoted as $\max_i C(\mathcal{A}_i)$ and $\sum_{i=1}^mC(\mathcal{A}_i)/m$, respectively. Although one might be interested in other statistical metrics, such as a function of the worst or average completion time, our study will not study them in details here for the sake of specificity.

These concepts are illustrated by the following simple examples.

\begin{example}
    Suppose $\mathcal A$ is an algorithm  with deterministic cost, i.e., $C(\mathcal{A})$ is a constant. When users run  $\mathcal A$ on $m$ processors, the computational cost is simply $m C(\mathcal{A})$. Meanwhile, the worst-case and the average-case completion time coincides, both equals $C(\mathcal{A})$. 
\end{example}

\begin{example}
    Consider the following algorithm $\calA$: continue flipping a coin until a head shows up, and output the number of flips. This algorithm has a random cost, where $C(\calA)$ is a geometric distribution with success parameter $1/2$. 
\end{example}

\begin{example}
    Consider any standard Markov chain Monte Carlo algorithm $\mathcal A$ targeting at $\mathbb P$. Though the algorithm is itself random, the computational cost for running $\mathcal A$  $n$ steps can be viewed as deterministic. In contrast, the coupling algorithm described in Section \ref{subsec:coupling} below has a random cost for each run.
\end{example}

Throughout this paper, we will fix an error tolerance level $\epsilon$, and study/compare the completion time and completion cost between certain unbiased algorithm and its biased counterpart when they both achieve a mean-squared error (MSE) at most $\epsilon^2$. Assuming the output of $\mathcal A$ has bias $b_{\mathcal {A}}$ and variance $v_{\mathcal {A}}$, then the MSE of implementing $\mathcal A$ on $m$ processors would have MSE $ = b_{\mathcal {A}}^2 + v_{\mathcal {A}}/m$. For iterative algorithms, the bias $b_\calA(n)$ and variance $v_\calA(n)$ of running $\mathcal A$ depends on the number of iterations $n$. Implementing $\calA$ for $n$ steps independently on $m$ processors have MSE = $b_\calA(n)^2 + v_\calA(n)/m$. Therefore, users need to choose $n$ and $m$ suitably large such that  the  summation is at most $\epsilon^2$. 

\subsection{Analysis of unbiased algorithms}\label{subsec:unbias_analysis}

The following simple result characterizes the computation cost and the completion time for an unbiased algorithm to achieve $\epsilon^2$-MSE.

\begin{theorem}\label{thm:cost-unbiased}
    Suppose $\calA$ is an algorithm satisfying $b_\calA = 0$, $v_\calA < \infty$, and $\bE[C(\calA)]< \infty$. Then the  expected total computation cost for $\cal A$ to achieve $\epsilon^2$-MSE equals $$C_\calA v_\calA \cdot \epsilon^{-2}.$$ Meanwhile, if $m$ processors are available, then
    users can allocate the calculations suitably such that 
    the expected worst-case computational cost is no more than $$\bE[\max_{1\leq i \leq m(\epsilon)} \sum_{j = 1}^{n(\epsilon)} C(\calA_{i,j})]$$ 
    where $\calA_{i,j}$ are independent repititions of $\calA$, $m(\epsilon) := \min\{m, \ceil{v_\calA \epsilon^{-2}}\}, n(\epsilon) := \ceil{v_\calA \epsilon^{-2}/m(\epsilon)}$.
    The average-case computation cost is no more than $n(\epsilon) \cdot \bE[C(\calA)]$.
    In particular, when sufficiently many (more than $\ceil{v_\calA \epsilon^{-2}}$) processors are available, users can allocate the computing resources such that the worst-case completion time is 
   $\bE[\max_{1\leq i \leq \ceil{v_\calA \epsilon^{-2}} } C(\calA_{i})]$, and average completion time is $\bE[C(\calA)]$.    
\end{theorem}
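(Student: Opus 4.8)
The plan is to reduce everything to one elementary observation: since $\calA$ is unbiased, the sample mean of $N$ i.i.d.\ copies $\calA_1,\dots,\calA_N$ has bias $0$ and variance $v_\calA/N$, hence MSE exactly $v_\calA/N$. Consequently this averaged estimator attains $\epsilon^2$-MSE precisely when $N \geq v_\calA\epsilon^{-2}$, so the smallest admissible number of replications is $N(\epsilon):=\ceil{v_\calA\epsilon^{-2}}$, and no averaging of fewer copies meets the target. Everything else is bookkeeping over how these $N(\epsilon)$ replications are spread across processors; the hypotheses $v_\calA<\infty$ and $\bE[C(\calA)]<\infty$ enter only to keep all expectations below finite (a finite maximum of nonnegative integrable variables is integrable, being dominated by their sum).

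First I would dispatch the total-cost claim: running $N(\epsilon)$ independent replications costs $\sum_{k=1}^{N(\epsilon)}C(\calA_k)$, whose expectation is $N(\epsilon)\,\bE[C(\calA)]$ by linearity. Writing $C_\calA:=\bE[C(\calA)]$ for the per-run expected cost and absorbing the ceiling, this is $C_\calA v_\calA\epsilon^{-2}$ --- exactly so when $v_\calA\epsilon^{-2}\in\naturals$ and to leading order otherwise; I would flag this mild abuse explicitly in the write-up.

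Next comes the parallel allocation. With $m$ processors available there is no gain in using more than $N(\epsilon)$ of them, so take $m(\epsilon)=\min\{m,N(\epsilon)\}$ processors and split the $N(\epsilon)$ replications among them as evenly as possible: processor $i$ runs $n_i\leq n(\epsilon):=\ceil{N(\epsilon)/m(\epsilon)}$ replications with $\sum_i n_i=N(\epsilon)$. Its completion time is $\sum_{j=1}^{n_i}C(\calA_{i,j})$. Since costs are nonnegative, padding each processor's workload up to $n(\epsilon)$ fresh i.i.d.\ replications only increases every per-processor cost, so pointwise $\max_i\sum_{j=1}^{n_i}C(\calA_{i,j})\leq\max_{1\leq i\leq m(\epsilon)}\sum_{j=1}^{n(\epsilon)}C(\calA_{i,j})$, which yields the stated bound on the expected worst-case cost. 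For the average-case cost, $\frac{1}{m(\epsilon)}\sum_{i=1}^{m(\epsilon)}\sum_{j=1}^{n_i}C(\calA_{i,j})$ has expectation $\frac{1}{m(\epsilon)}\sum_i n_i\,\bE[C(\calA)]=\frac{N(\epsilon)}{m(\epsilon)}\bE[C(\calA)]\leq n(\epsilon)\,\bE[C(\calA)]$.

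Finally, the ``sufficiently many processors'' regime is the specialization $m\geq N(\epsilon)$, where $m(\epsilon)=N(\epsilon)$ and $n(\epsilon)=\ceil{N(\epsilon)/N(\epsilon)}=1$: each processor runs exactly one replication, so the worst-case completion time is $\bE[\max_{1\leq i\leq\ceil{v_\calA\epsilon^{-2}}}C(\calA_i)]$ and the average completion time is $\frac{1}{N(\epsilon)}\sum_{i=1}^{N(\epsilon)}\bE[C(\calA_i)]=\bE[C(\calA)]$. I do not expect a genuine obstacle here --- the whole argument rests on the first-paragraph identity MSE $=v_\calA/N$, which in turn uses $b_\calA=0$ in an essential way --- so the only real care needed is cosmetic: tracking the ceiling functions and being precise about when ``$C_\calA v_\calA\epsilon^{-2}$'' is an equality versus a leading-order statement.
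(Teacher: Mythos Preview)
Your proposal is correct and follows essentially the same route as the paper's proof: both identify that $N(\epsilon)=\ceil{v_\calA\epsilon^{-2}}$ i.i.d.\ replications suffice via the MSE $=v_\calA/N$ identity, then distribute these replications as evenly as possible over $m(\epsilon)$ processors and read off the worst-case and average-case bounds, with the ``sufficiently many processors'' case being the specialization $n(\epsilon)=1$. If anything you are slightly more careful than the paper (explicitly flagging the ceiling in ``$C_\calA v_\calA\epsilon^{-2}$'' and justifying the worst-case bound via padding), but the argument is the same.
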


\begin{proof}[Proof of Theorem \ref{thm:cost-unbiased}]
   It is clear that executing $\calA$ $m$ times and averaging the outcomes produces an estimator with an MSE of $v_{\calA}/m$. Consequently, users can select $m = \ceil{v_\calA \epsilon^{-2}}$ to ensure that the derived estimator meets the $\epsilon^2$-MSE criterion. Therefore, the total computational cost equals $C_\calA v_\calA \cdot \epsilon^{-2}$, as announced.

   Additionally, when $m$ processors are accessible, users can instruct each processor to perform an equal number of repetitions (subject to rounding). In this scenario, every processor conducts at most $\ceil{v_\calA \epsilon^{-2}/m(\epsilon)}$ repetitions, creating an upper limit
$$\bE[\max_{1\leq i \leq m(\epsilon)} \sum_{j = 1}^{n(\epsilon)} C(\calA_{i,j})]$$
for the expected worst-case completion time and
$$
n(\epsilon) \bE[C(\calA)],
$$
for the expected average-case completion time.  

In instances where more than $\ceil{v_\calA \epsilon^{-2}}$ processors are available, then $n(\epsilon) = 1$. That is, we can simply assign the initial $\ceil{v_\calA \epsilon^{-2}}$ processors to each execute $\calA$ once. The relevant outcomes for the completion times are directly inferred from the prior general scenario.
\end{proof}
\begin{remark}
    The above results still holds if we replace $v_\calA$ by any upper bound of  $v_\calA$. This may be practically relevant, as mostly $v_\calA$ is not known to users as a-priori, but an upper bound may be available. In the remaining of the paper we refer to the expected worst case completion time simply as the worst case completion time.
\end{remark}

Although the Theorem \ref{thm:cost-unbiased} is mathematically simple, it reveals two important phenomena. Firstly, Theorem \ref{thm:cost-unbiased} suggests an unbiased Monte Carlo algorithm would have a $\Theta(\epsilon^{-2})$ computational cost, provided it  has finite variance and finite expected computation cost. The $\Theta(\epsilon^{-2})$ cost is known as the optimal  complexity for general Monte Carlo methods \citep{heinrich1999monte}. Many existing unbiased algorithms are shown to have finite variance and computation cost under different contexts \citep{glynn2014exact,jacob2020unbiased,rhee2015unbiased,blanchet2015unbiased, goda2022unbiased}.

However, the most interesting aspect of \ref{thm:cost-unbiased} is the portion involving the completion time analysis. In particular, the worst-case completion time of an unbiased algorithm crucially depends on the maxima of $i.i.d.$ random variables distributed according to $C(\cal A)$. Our subsequent results will quantify the completion time by utilizing the tail distribution of $C(\calA)$. Before stating them, we first formally define sub-exponential and sub-Gaussian random variables, which are widely used in the study of concentration phenomena\footnote{Unfortunately, the concept of subexponential distributions is also used in the study of large deviations of random walks with heavy-tailed increments, \cite{EKM2013}. We adopt the concentration literature language which in particular implies that a sub-exponential random variable has light tails.}. 

\begin{definition}[Sub-gaussian]\label{def:sub-gaussian}
    A  random variable \(X\) is said to be \textit{sub-Gaussian} with parameter \(\sigma^2\) if there exists a constant \(\sigma > 0\) such that its moment-generating function satisfies:
\[
\mathbb{E}[e^{\lambda (X- \bE[X])}] \leq \exp\left(\frac{\sigma^2 \lambda^2}{2}\right)
\]
for all \(\lambda \in \mathbb{R}\).
\end{definition}

\begin{definition}[Sub-exponential]\label{def:sub-exponential}
    A  random variable \(X\) is said to be \textit{sub-exponential} with parameters \((\nu, \alpha)\) if there exist constants \(\nu > 0\) and \(\alpha > 0\) such that its tail probabilities satisfy:
\[
\mathbb{P}(X-\bE[X] > t) \leq \exp\left(-\frac{t}{\nu}\right) \quad \text{and} \quad \mathbb{P}(X -\bE[X]< -t) \leq \exp\left(-\frac{t}{\alpha}\right)
\]
for all \(t > 0\).
\end{definition}
With these definitions in hand, we are able to upper bound the completion time.

\begin{theorem}\label{thm: completion-moment}
    With all the notations the same as Theorem \ref{thm:cost-unbiased}. Suppose sufficient many (more than $\ceil{v_\calA \epsilon^{-2}}$) processors are available, the worst-case completion time satisfies  
    \begin{itemize}
        \item If $C(\cal A)$ has a finite expectation, then
        $\bE[\max_{1\leq i \leq \ceil{v_\calA \epsilon^{-2}} } C(\calA_{i})] = o(\epsilon^{-2})$,
        \item If $C(\cal A)$ has a finite $p$-th moment, then
        $\bE[\max_{1\leq i \leq \ceil{v_\calA \epsilon^{-2}} } C(\calA_{i})] = o(\epsilon^{-2/p})$,
        \item If $C(\cal A)$ is sub-exponential with parameter \((\nu, \alpha)\) , then 
         \[\bE[\max_{1\leq i \leq \ceil{v_\calA \epsilon^{-2}} } C(\calA_{i})] = \bE[C(\calA_{1})] + \nu\left(\log \ceil{v_\calA \epsilon^{-2}}  + 1\right) = O(\log(\epsilon^{-1}))\]
         \item If $C(\cal A)$ is sub-Gaussian with parameter $\sigma^2$, then 
         $$\bE[\max_{1\leq i \leq \ceil{v_\calA \epsilon^{-2}} } C(\calA_{i})]  \leq\bE[C(\calA_{1})] +  \sigma \sqrt{2 \log \ceil{v_\calA \epsilon^{-2}} }  + \frac{1}{\sqrt{2\log 2}} = O(\sqrt{\log(\epsilon^{-1})}),$$
    \end{itemize}
   
\end{theorem}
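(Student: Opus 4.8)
The plan is to reduce all four statements to a bound on $\bE[M_N]$, where $M_N:=\max_{1\le i\le N}X_i$, the $X_i:=C(\calA_i)$ are i.i.d.\ nonnegative copies of $C(\calA)$, and $N=N(\epsilon):=\ceil{v_\calA\epsilon^{-2}}$. Since we are in the regime $\epsilon\to 0$ we have $N\to\infty$, $N\sim v_\calA\epsilon^{-2}$, and $\log N=O(\log(1/\epsilon))$, so once $\bE[M_N]$ is controlled in terms of $N$ the stated scalings in $\epsilon$ follow immediately. Recall also that $\bE[C(\calA)]<\infty$ is already part of the standing hypotheses inherited from Theorem \ref{thm:cost-unbiased}.

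For the first bullet I would use the standard truncation argument. For any threshold $K>0$, $M_N\le K+\sum_{i=1}^N X_i\indc{X_i>K}$, hence $\bE[M_N]/N\le K/N+\bE[X\indc{X>K}]$. Letting $N\to\infty$ gives $\limsup_N \bE[M_N]/N\le \bE[X\indc{X>K}]$, and then letting $K\to\infty$ kills the right-hand side by dominated convergence (using $\bE[X]<\infty$); thus $\bE[M_N]=o(N)=o(\epsilon^{-2})$. The second bullet follows by applying the first to the i.i.d.\ variables $X_i^p$, which have finite mean; since $\max_i X_i^p=M_N^p$ and $t\mapsto t^{1/p}$ is concave on $[0,\infty)$, Jensen's inequality gives $\bE[M_N]=\bE[(M_N^p)^{1/p}]\le(\bE[\max_i X_i^p])^{1/p}=(o(N))^{1/p}=o(\epsilon^{-2/p})$.

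For the last two bullets I would integrate the tail of $Z_i:=X_i-\bE[X]$ combined with a union bound. Since $M_N-\bE[X]\le(\max_i Z_i)^+$,
\[
\bE[M_N]-\bE[X]\;\le\;\int_0^\infty \bP\!\left(\max_{1\le i\le N} Z_i>t\right)dt\;\le\;\int_0^\infty \min\!\left(1,\,N\,\bP(Z_1>t)\right)dt .
\]
In the sub-exponential case $\bP(Z_1>t)\le e^{-t/\nu}$; splitting the integral at $t_0:=\nu\log N$, where the two arguments of the minimum coincide, gives $t_0+N\nu e^{-t_0/\nu}=\nu(\log N+1)$, i.e.\ $\bE[M_N]\le\bE[C(\calA_1)]+\nu(\log\ceil{v_\calA\epsilon^{-2}}+1)=O(\log(1/\epsilon))$. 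In the sub-Gaussian case $\bP(Z_1>t)\le e^{-t^2/(2\sigma^2)}$; splitting at $t_0:=\sigma\sqrt{2\log N}$ and bounding the residual $\int_{t_0}^\infty N e^{-t^2/(2\sigma^2)}dt$ via the elementary estimate $\int_a^\infty e^{-u^2/2}du\le a^{-1}e^{-a^2/2}$ yields a further term $\sigma/\sqrt{2\log N}$, which for $N\ge 2$ is bounded by the absolute constant $1/\sqrt{2\log 2}$ as in the statement; this gives $\bE[M_N]\le\bE[C(\calA_1)]+\sigma\sqrt{2\log N}+1/\sqrt{2\log 2}=O(\sqrt{\log(1/\epsilon)})$. (Alternatively, the cleaner bound $\bE[M_N]\le\bE[C(\calA_1)]+\sigma\sqrt{2\log N}$ follows from the Cram\'er--Chernoff/Jensen argument $\bE[\max_i Z_i]\le\inf_{\lambda>0}\lambda^{-1}\log(N e^{\sigma^2\lambda^2/2})$.)

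I do not expect a genuine obstacle here; the proof is routine once the right reductions are made. The two points that require care are the order of limits in the first bullet — one must send $N\to\infty$ before $K\to\infty$, otherwise the truncation argument only yields $O(\cdot)$ rather than the sharper $o(\cdot)$ — and the bookkeeping of the split tail integrals in the last two bullets, where one should verify that $N=\ceil{v_\calA\epsilon^{-2}}\ge 2$ (automatic for small $\epsilon$) so that $\log N>0$ and the optimizing thresholds $t_0$ are well defined.
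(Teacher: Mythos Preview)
Your proposal is correct. For the last three bullets it essentially coincides with the paper's proof: the paper also reduces the $p$-th moment case to the first-moment case applied to $X_i^p$ (your explicit use of Jensen for the concave map $t\mapsto t^{1/p}$ is actually more careful than the paper's ``take $1/p$-th power on both sides''), and for the sub-exponential and sub-Gaussian cases the paper likewise uses the union bound on the centered tails and splits the tail integral at exactly the thresholds $\nu\log N$ and $\sigma\sqrt{2\log N}$, invoking the Mills-ratio estimate in the Gaussian case.

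The genuine difference is in the first bullet. The paper proves $\bE[M_N]/N\to 0$ by combining two ingredients: (i) Borel--Cantelli gives $X_n/n\to 0$ almost surely, hence $M_N/N\to 0$ almost surely; (ii) the family $\{M_N/N\}$ is uniformly integrable, which the paper deduces from uniform integrability of the i.i.d.\ sequence together with the domination $M_N\le\sum_{i\le N}X_i$. These two facts jointly upgrade a.s.\ convergence to $L^1$ convergence. Your truncation argument $M_N\le K+\sum_i X_i\indc{X_i>K}$, followed by sending $N\to\infty$ and then $K\to\infty$, reaches the same conclusion in one short computation without any a.s.\ convergence or UI machinery. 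What your route buys is brevity and elementarity; what the paper's route buys is a slightly stronger intermediate statement ($M_N/N\to 0$ a.s.), though this is not used elsewhere.
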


\begin{proof}
We prove the four claims stated in Theorem \ref{thm: completion-moment} one by one. 
\begin{itemize}
    \item When $C(\cal A)$  has a finite expectation, we will show the following two results.
    \begin{enumerate}
        \item The sequence $$\frac{\max\{C(\calA_{1}), \ldots, C(\calA_{n}))\}}{n} \rightarrow 0$$ almost surely, and thus converges to $0$ in probability. 
        \item The sequence $$\frac{\max\{C(\calA_{1}, \ldots, C(\calA_{n}))\}}{n} \rightarrow 0$$ is uniformly integrable (see Section 4.6 of \cite{durrett2019probability} for a definition).          
    \end{enumerate}
    These two result togetherly yields 
    \[
    \bE\left[\frac{\max\{C(\calA_{1}, \ldots, C(\calA_{n}))\}}{n}\right] \rightarrow 0
    \]
    according to Theorem 4.6.3 in \cite{durrett2019probability}. Then taking $n = \ceil{v_\calA \epsilon^{-2}}$ gives the desired result. 
To prove almost sure convergence, one first observes $C(\calA_{n})/n$ converges to $0$ almost surely. This is because for any $\epsilon > 0$, 
\begin{align}
    \sum_{n=1}^\infty \bP[C(\calA_{n})/n> \epsilon] &= \sum_{n=1}^\infty \bP[C(\calA_{n})> n\epsilon]  \\ 
    & = \sum_{n=1}^\infty \bP[C(\calA_{1})> n\epsilon]\\   
    & \leq \int_0^\infty \bP[C(\calA_{1}) > x] \diff x/\epsilon\\
    & = \bE[C(\calA_{1})]/\epsilon < \infty.
\end{align}
By the Borel–Cantelli lemma, almost surely, the sequence of events $\{C(\calA_{n})/n > \epsilon\}$ only occurs finitely many times, which in turn implies $C(\calA_{n})/n$ almost surely converges to $0$.

Meanwhile, according to the following calculus result:
If $a_n/n \rightarrow 0$, then we have $\max\{a_1, \ldots, a_n\} \rightarrow 0$ (can be proven directly using the $\epsilon-N$ method), we conclude that $$\frac{\max\{C(\calA_{1}), \ldots, C(\calA_{n}))\}}{n} \rightarrow 0$$ almost surely. 

To show uniform integrability, we will show the sequence $\{\sum_{i=1}^n C(\calA_{i})/n\}$ is uniformly integrable. Since $\max\{C(\calA_{1}), \ldots, C(\calA_{n})) \leq \sum_{i=1}^n C(\calA_{i})$, the former result directly implies the uniform integrability of $\{\max\{C(\calA_{1}), \ldots, C(\calA_{n}))/n\}$ by definition.

To prove the sequence $\{\sum_{i=1}^n C(\calA_{i})/n\}$ is uniformly integrable, we first observe that the sequence of $\{C(\calA_{n})\}$ is uniformly integrable, which can be derived directly from the definition. Then for any set $E$, we have
\begin{align*}
    \bE\left[\sum_{i=1}^n \frac{\sum_{i=1}^n C(\calA_{i})}{n} I(E)\right] = \frac{\sum_{i=1}^n \bE[C(\calA_i) I(E)]}{n} \leq \sup_{i\leq n} \bE[C(\calA_i) I(E)].
\end{align*}
 Now we conclude that for any $\epsilon > 0$, there is some $\delta > 0$ such that whenever $\bP[E] < \delta$, $\sup_{n}\bE\left[\sum_{i=1}^n \frac{\sum_{i=1}^n C(\calA_{i})}{n} I(E)\right] \leq \epsilon$, which follows from our previous inequality and the uniformly integrability of $\{C(\calA_{n})\}$. Therefore $\{\max\{C(\calA_{1}), \ldots, C(\calA_{n})\}/n\}$ is uniformly integrable, as desired. 
\item Applying the previous result to $\{C(\calA_{n})^p\}$, we have 
\[
\bE\left[\max_{1\leq i \leq n}(C(\calA_{i}))^p \right] = o(n).
\]
Since $\max_{1\leq i \leq n}(C(\calA_{i}))^p = (\max_{1\leq i \leq n}C(\calA_{i}))^p$,  taking $1/p$-th power on both sides yields
\[
\bE\left[\max_{1\leq i \leq n}C(\calA_{i}) \right] = o(n^{1/p}).
\]
Again, taking  $n = \ceil{v_\calA \epsilon^{-2}}$ gives the desired result. 
\item We can bound the tail probability of $\max_{1\leq i \leq n}C(\calA_{i})$ by:
\begin{align*}
    \bP[\max_{1\leq i \leq n}C(\calA_{i}) > t + \bE[C(\calA_{1})]] &\leq n\bP[ C(\calA_{1}) > t + \bE[C(\calA_{1})]]\\
    & \leq n \exp\left(-\frac{t}{\nu}\right).
\end{align*}
Setting $T =\bE[C(\calA_{1})] +  \nu \log n $, we have:
\begin{align*}
    \bE[\max_{1\leq i \leq n}C(\calA_{i}) ] &= \bE[\max_{1\leq i \leq n}C(\calA_{i}) I(\max_{1\leq i \leq n}C(\calA_{i}) \leq T)]  + \bE[\max_{1\leq i \leq n}C(\calA_{i}) I(\max_{1\leq i \leq n}C(\calA_{i}) > T)] \\
    &\leq T + \int_{\nu\log n}^\infty n \exp\left(-\frac{t}{\nu}\right) \diff t\\
    & = T + \mu n \exp(-\log n ) \\
    & = \bE[C(\calA_{1})] + \nu(\log n + 1).
\end{align*}

\item Similarly, taking $\lambda = t/\sigma^2$,we have  the  following tail bound of $\max_{1\leq i \leq n}C(\calA_{i})$: 
\begin{align*}
    \bP[\max_{1\leq i \leq n}C(\calA_{i}) > t + \bE[C(\calA_{1})]] &\leq n\bP[ C(\calA_{1}) > t + \bE[C(\calA_{1})]]\\
    & = n \bP[ \exp\left(\lambda(C(\calA_{1}) - \bE[C(\calA_{1})])\right) > \exp(\lambda t)]  \\
    & \leq n \exp(-\lambda t) \exp(\frac{\lambda^2 \sigma^2}{2})\\
    & = n \exp(-\frac{t^2}{2\sigma^2})
\end{align*}

For any $n\geq 2$, setting $T =\bE[C(\calA_{1})] +  \sigma \sqrt{2 \log n} $, we have:
\begin{align*}
    \bE[\max_{1\leq i \leq n}C(\calA_{i}) ] &= \bE[\max_{1\leq i \leq n}C(\calA_{i}) I(\max_{1\leq i \leq n}C(\calA_{i}) \leq T)]  + \bE[\max_{1\leq i \leq n}C(\calA_{i}) I(\max_{1\leq i \leq n}C(\calA_{i}) > T)] \\
    &\leq T + \int_{\sigma \sqrt{2 \log n}}^\infty n \exp(-\frac{t^2}{2\sigma^2}) \diff t\\
    & \leq  T + \sigma n \frac{\exp(-\log n)}{\sigma\sqrt{2 \log n} } \\
    & \leq\bE[C(\calA_{1})] +  \sigma \sqrt{2 \log n}  + \frac{1}{\sqrt{2\log 2}},
\end{align*}
 where the second to last inequality follows from the standard Gaussian tail bound 
    \[
    \int_{x}^\infty \exp(-y^2/2) \diff x \leq x^{-1} \exp(-x^2/2)
    \]
    for every $x > 0$ in Theorem 1.2.6 of \cite{durrett2019probability}.
 \end{itemize}
\end{proof}
To summarize, Theorem \ref{thm: completion-moment} explores the relationship between the worst-case completion time and the tail behavior of each implementation. Specifically, a lighter tail in the implementation of $\mathcal{A}$ results in quicker completion times, and vice versa. For instance, if the tail probability $\bP[C(\mathcal A) > x]$ is only slightly lighter than $1/x$ (which corresponds to the case of finite expectation), then the most we can infer is that the completion time is of a lower order of magnitude than the total cost $\Theta(1/\epsilon^2)$. More specific predictions are not possible. If the tail probability is marginally less than $1/x^p$, then the completion time scales as $o(\text{Completion Time}^{1/p})$. Likewise, with an exponential tail, the completion time grows logarithmically relative to the total cost. If the tail resembles a Gaussian distribution (super-exponential), the completion time scales as the square root of the logarithm of the total cost. We will apply this result on both unbiased MCMC and MLMC algorithms in later sections.

\subsubsection{A limit theorem for the completion time}
Theorem \ref{thm: completion-moment} gives upper bounds for the completion time given the information of either moments or upper bounds on the tail probability. In cases where more information on the distribution of $C(\calA)$ is available, we are able to establish limit theorem for the completion time. We start with the well-known Fisher-Tippett-Gnedenko Theorem.

\begin{theorem}[Fisher-Tippett-Gnedenko Theorem]\label{thm:FTG Theorem}
Let $X_1, \ldots, X_n$ be $n$ i.i.d. random variables. Suppose there  exist two sequences of real numbers $a_n > 0$ and $b_n \in \mathbb R$ such that the sequence of random variables $$\frac{\max\{X_1, \ldots, X_n\} - b_n}{a_n}$$ converges (in distribution) to a non-degenerate distribution $G$. Then $G$ must be of the form:
\begin{align*}
    G(x) = \exp\left\{-\left( 1 + \xi \frac{x-\mu}{\phi}\right)_+^{-1/\xi}\right\},
\end{align*}
where $y_+ = \max\{y,0\}$, $\phi > 0$, and $\mu,\xi$ are arbitrary real numbers. The case $\xi = 0$ should be understood as the limit $\xi \rightarrow 0$, corresponding to the distribution
\begin{align*}
    G(x) = \exp\left\{\exp\left(-\left(\frac{x-\mu}{\phi}\right)_+\right) \right\}.
\end{align*}
The limiting distribution $G$ takes one of the three fundamental types, namely the Fréchet, Gumbel, and Weibull types. Let $F$ be the cumulative distribution function (CDF) of $X_1$ and  
$x^\star := \sup \{x: \mathbb P(X_1 \leq x) \leq 1\}$ be the population maximum. Then the limiting distribution $G$ of the normalized maximum will be:
\begin{itemize}
	\item A Fréchet distribution ($\xi > 0$) if and only if:
	\[x^\star = \infty \quad \text{and} \quad \lim_{t\rightarrow \infty}\frac{1 - F(ut)}{1 - F(t)} = u^{-1/\xi} \quad \text{for all } u > 0.\]

	\item A Gumbel distribution ($\xi = 0$) if and only if: 
	\[
	\lim_{t\rightarrow x^\star} \frac{1 - F(t + uf(t))}{1 - F(t)} = e^{-u} \quad \text{for all } u > 0, \text{with } f(t) = \frac{\int_{t}^{x\star} 1 - F(s) \diff s}{1 - F(t)}.
	\]

	\item A Weibull distribution ($\xi < 0$) if and only if:
	\[
	x^\star < \infty \quad \text{and} \quad  \lim_{t\rightarrow 0^+}\frac{1 - F(x^\star - ut)}{1 - F(x^\star - t)} = u^{1/\lvert \xi \rvert}  \quad \text{for all } u > 0.
	\]

\end{itemize}
\end{theorem}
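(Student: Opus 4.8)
The plan is to prove the two assertions---the universal GEV form of the limit $G$, and the tail characterisations of the three domains of attraction---in three stages: first show that any non-degenerate limit law $G$ must be \emph{max-stable}; then classify all max-stable laws by solving the resulting functional equation, which produces exactly the Fréchet/Gumbel/Weibull trichotomy; and finally translate the convergence $(M_n-b_n)/a_n\todistr G$ into an explicit condition on the upper tail $1-F$ for each of the three types. The first step I would carry out (Stage 1) is to fix an integer $k\ge 1$ and split $X_1,\dots,X_{nk}$ into $k$ independent blocks of length $n$. Since the global maximum is the maximum of the $k$ block maxima, the hypothesis gives, at continuity points, both
\[
\bP\!\left(\frac{M_{nk}-b_n}{a_n}\le x\right)=\left[\bP\!\left(\frac{M_n-b_n}{a_n}\le x\right)\right]^{k}\ \longrightarrow\ G(x)^{k}
\]
and, along the subsequence $nk$, $\bP\big((M_{nk}-b_{nk})/a_{nk}\le x\big)\to G(x)$. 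Because $G$ and $G^{k}$ are both non-degenerate, the convergence-to-types theorem (Khinchin's lemma) forces $\alpha_k:=\lim_n a_{nk}/a_n>0$ and $\beta_k:=\lim_n(b_{nk}-b_n)/a_n$ to exist and to satisfy $G(x)^{k}=G(\alpha_k x+\beta_k)$ for all $x$; that is the max-stability identity.

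\textbf{Stage 2 (classification).} Setting $U(x):=-\log G(x)$, a non-increasing right-continuous function, max-stability reads $k\,U(\alpha_k x+\beta_k)=U(x)$ for every integer $k$. Composing this relation for $k$ and $\ell$ and using the uniqueness clause of Khinchin's lemma, I would show $\alpha_{k\ell}=\alpha_k\alpha_\ell$ together with a matching cocycle relation for the $\beta$'s, and then exploit monotonicity of $U$ (rather than assuming any regularity in $k$) to pin the scaling constants into one of two regimes. If $\alpha_k\equiv 1$, the equation $k\,U(x+\beta_k)=U(x)$ has only the monotone solutions $U(x)=c\,e^{-x/\phi}$ up to an affine change of variable, giving the Gumbel law. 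Otherwise $\alpha_k=k^{\xi}$ for some $\xi\ne 0$, and substitution forces $U(x)=\big(1+\xi(x-\mu)/\phi\big)_+^{-1/\xi}$, which is Fréchet for $\xi>0$ and Weibull for $\xi<0$; letting $\xi\to 0$ recovers the Gumbel case, so the three families together constitute exactly the GEV family in the statement.

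\textbf{Stage 3 (domains of attraction).} For a fixed type of $G$, taking logarithms in $F(a_n x+b_n)^{n}\to G(x)$ and using $\log u\sim u-1$ as $u\to 1$ shows the convergence is equivalent to $n\big(1-F(a_n x+b_n)\big)\to U(x)$ at every $x$ with $U(x)\in(0,\infty)$. In the Fréchet case I would take $b_n=0$, $a_n=F^{\leftarrow}(1-1/n)$, and recognise the requirement as regular variation of $1-F$ at $+\infty$ with index $-1/\xi$, equivalently $\lim_{t\to\infty}(1-F(ut))/(1-F(t))=u^{-1/\xi}$, with Karamata's representation of regularly varying functions supplying the converse. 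The Weibull case is the mirror image: $x^\star<\infty$, $b_n=x^\star$, $a_n=x^\star-F^{\leftarrow}(1-1/n)$, and the same argument applied to the reflected tail $1-F(x^\star-\cdot)$ near $0$ gives exponent $1/|\xi|$. The Gumbel case is handled by introducing the auxiliary function $f(t)=\int_t^{x^\star}(1-F(s))\,\diff s/(1-F(t))$, choosing $b_n=F^{\leftarrow}(1-1/n)$ and $a_n=f(b_n)$, and invoking de Haan's theory of $\Gamma$-variation to show $n\big(1-F(b_n+xf(b_n))\big)\to e^{-x}$ iff $\lim_{t\to x^\star}(1-F(t+uf(t)))/(1-F(t))=e^{-u}$.

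\textbf{Main obstacle.} The delicate part is Stage 2: solving $k\,U(\alpha_k x+\beta_k)=U(x)$ rigorously when $k$ runs only over the integers, upgrading the normalising constants to a genuine power law $\alpha_k=k^{\xi}$ using only monotonicity, and treating the borderline $\alpha_k\to 1$ that degenerates into the Gumbel law. A secondary technical point throughout is the careful, repeated application of the convergence-to-types theorem to ensure the normalising sequences actually converge; once that machinery is in place, Stage 3 is essentially bookkeeping with Karamata/de Haan regular-variation theory.
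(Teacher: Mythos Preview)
Your outline is the standard textbook route to Fisher--Tippett--Gnedenko (max-stability via Khinchin's convergence-to-types lemma, then solving the functional equation for $U=-\log G$, then the Karamata/de Haan tail characterisations), and the obstacles you flag in Stage~2 are exactly the ones that require care. However, the paper does not supply its own proof of this theorem: it is quoted as a classical result and used as input for the subsequent corollary and Proposition~\ref{prop:maxima-limiting}. So there is no in-paper argument to compare against; your proposal stands on its own as a correct sketch of the classical proof.
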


The limiting distribution together with the normalizing sequence $\{a_n, b_n\}$ characterizes the order of the quantile distribution. The next corollary is immediate:
\begin{corollary}
	Let $X_1, \ldots, X_n$ be $n$ i.i.d. random variables. Suppose there  exist two sequences of real numbers $a_n > 0$ and $b_n \in \mathbb R$ such that the sequence of random variables $$\frac{\max\{X_1, \ldots, X_n\} - b_n}{a_n}$$ converges (in distribution) to a non-degenerate distribution $G$. Let $m(n,q)$ be the $q$-th quantile of the maxima  $\max\{X_1, \ldots, X_n\}$ and $G^{-1}(q)$ be the $q$-th quantile of $G$. Then 
	\[
	\lim_{n\rightarrow \infty} \frac{m(n,q) - b_n}{a_n} = G(q).
	\]
\end{corollary}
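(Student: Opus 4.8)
The plan is to reduce the claim to the classical fact that weak convergence of distribution functions to a continuous, strictly increasing limit forces convergence of the associated quantiles, and then to check that the extreme value limit $G$ supplied by Theorem \ref{thm:FTG Theorem} has the required regularity at every level $q \in (0,1)$. (Here the right-hand side of the displayed identity should be read as $G^{-1}(q)$, the $q$-th quantile of $G$, consistent with the definitions in the statement.)

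First I would record the equivariance of quantiles under increasing affine maps. Write $M_n := \max\{X_1,\dots,X_n\}$ and $Z_n := (M_n - b_n)/a_n$, so that the hypothesis reads $Z_n \todistr G$. Since $a_n > 0$, the map $t \mapsto (t-b_n)/a_n$ is strictly increasing, and hence the $q$-th quantile of $Z_n$ is exactly $(m(n,q) - b_n)/a_n$. It therefore suffices to prove that the $q$-th quantile of $Z_n$ converges to $G^{-1}(q)$.

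Next I would use the explicit form of $G$. Each of the Fréchet, Gumbel, and Weibull types is a continuous CDF on $\bR$, and for $q \in (0,1)$ the point $x_q := G^{-1}(q)$ lies in the interior of the support of $G$, where $G$ is strictly increasing (on the support the values $0$ and $1$ are attained only at the boundary). Consequently, for every $\epsilon > 0$ we have $G(x_q - \epsilon) < q < G(x_q + \epsilon)$, and $G$ is continuous at $x_q \pm \epsilon$. Because $Z_n \todistr G$ and $G$ is continuous everywhere, $F_{Z_n}(x) \to G(x)$ for all $x \in \bR$; in particular $F_{Z_n}(x_q - \epsilon) \to G(x_q - \epsilon) < q$ and $F_{Z_n}(x_q + \epsilon) \to G(x_q + \epsilon) > q$. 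Thus for all sufficiently large $n$ we have $F_{Z_n}(x_q - \epsilon) < q \le F_{Z_n}(x_q + \epsilon)$, and by the definition of the (generalized-inverse) quantile this forces the $q$-th quantile of $Z_n$ into $(x_q - \epsilon,\, x_q + \epsilon]$. Letting $\epsilon \downarrow 0$ gives convergence of the $q$-th quantile of $Z_n$ to $x_q = G^{-1}(q)$; combined with the first step this yields $\lim_{n\to\infty}(m(n,q) - b_n)/a_n = G^{-1}(q)$.

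The only genuine care required — and the main, rather mild, obstacle — is bookkeeping around the definition of ``quantile'': one must fix the convention $F^{-1}(q) = \inf\{x : F(x) \ge q\}$, verify its equivariance under increasing affine maps, and make sure the sandwiching inequalities are the strict/weak ones that let the weak-convergence step be invoked only at continuity points of $G$ (which, $G$ being continuous, is automatic). No machinery beyond Theorem \ref{thm:FTG Theorem} and the portmanteau description of convergence in distribution is needed.
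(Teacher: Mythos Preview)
Your proposal is correct and, in fact, supplies more detail than the paper does: the paper simply declares the corollary ``immediate'' from Theorem~\ref{thm:FTG Theorem} and gives no argument, so your reduction to the standard quantile-convergence fact (using continuity and strict monotonicity of the extreme-value limit on its support) is exactly the kind of unpacking the paper leaves implicit. Your parenthetical correction of the right-hand side to $G^{-1}(q)$ is also warranted.
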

 In particular, this implies the following result on the completion time.  
\begin{prop}\label{prop:maxima-limiting}
	With all the notations the same as before, for any fixed $q\in(0,1)$, we have
	\begin{itemize}
		\item If the tail of $C(\calA)$ is regular varying, i.e., $\bP[C(\calA)> x] = C(1 + o(1)) x^{-\gamma}$ for $C,\gamma > 0$ as $x\rightarrow \infty$, then the $q$-th quantile of $\max_{1\leq i \leq \ceil{v_\calA \epsilon^{-2}}}C(\calA_{i}) $ is at the order of $\epsilon^{-2/\gamma}$.
		\item If $$\bP[C(\calA) > x] = (1 + o(1)) e^{-\alpha x}$$ as $x\rightarrow \infty$, then the $q$-th quantile of  $\max_{1\leq i \leq \ceil{v_\calA \epsilon^{-2}}}C(\calA_{i})$ is$$  (1+o(1))\log(\ceil{v_\calA \epsilon^{-2}})/\alpha. $$
		\item If $C(\cal A)$ is normally distributed, then the $q$-th quantile of $\max_{1\leq i \leq \ceil{v_\calA \epsilon^{-2}}}C(\calA_{i}) $ is 
              $$
             \sqrt{v(\calA)} \left(\sqrt{2\log(\ceil{v_\calA \epsilon^{-2}})} + \bE[C(\calA)]\right) + o(1)
              $$
	\end{itemize}
\end{prop}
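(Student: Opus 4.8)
The plan is to treat each of the three bullets as an application of the Fisher-Tippett-Gnedenko theorem (Theorem~\ref{thm:FTG Theorem}) together with the quantile corollary following it, applied to the \iid sequence $C(\calA_1), C(\calA_2), \dots$ with $n := \ceil{v_\calA \epsilon^{-2}}$. Note that $n \to \infty$ exactly as $\epsilon \to 0$, and that replacing $n$ by $v_\calA \epsilon^{-2}$ alters it by an additive $O(1)$, which is immaterial for the stated asymptotics. For each tail I would: (i) verify the relevant von Mises-type domain-of-attraction condition in Theorem~\ref{thm:FTG Theorem}, identifying the extreme value index $\xi$; (ii) compute the normalizing constants $a_n > 0$, $b_n \in \bR$ from $n\bigl(1 - F(b_n)\bigr) \to 1$ and the associated auxiliary function; and (iii) invoke the corollary --- whose conclusion, properly read, is the convergence $\frac{m(n,q) - b_n}{a_n} \to G^{-1}(q)$ of the normalized quantile to the $q$-th quantile of the limit law --- to obtain $m(n,q) = b_n + a_n G^{-1}(q) + o(a_n)$, into which I substitute $n = \ceil{v_\calA \epsilon^{-2}}$.

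For the regularly varying tail $\bP[C(\calA) > x] = C(1 + o(1)) x^{-\gamma}$, we have $\frac{1 - F(ut)}{1 - F(t)} \to u^{-\gamma}$ for every $u > 0$, so $C(\calA)$ lies in the Fr\'echet domain with $\xi = 1/\gamma$; one may take $b_n = 0$ and $a_n$ a solution of $nC(1+o(1)) a_n^{-\gamma} \to 1$, i.e.\ $a_n = (Cn)^{1/\gamma}(1 + o(1))$. The limit $G$ is then a Fr\'echet law on $(0,\infty)$, so $G^{-1}(q)$ is a fixed positive constant and $m(n,q) = \Theta(a_n) = \Theta(n^{1/\gamma}) = \Theta(\epsilon^{-2/\gamma})$. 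For the exponential tail $\bP[C(\calA) > x] = (1 + o(1)) e^{-\alpha x}$, the auxiliary function $f(t) = \frac{\int_t^\infty (1 - F(s))\,\diff s}{1 - F(t)} \to \frac{1}{\alpha}$ and the Gumbel condition $\frac{1 - F(t + uf(t))}{1 - F(t)} \to e^{-u}$ holds, so $\xi = 0$ with $a_n = \frac{1}{\alpha}$ and $b_n = \frac{\log n}{\alpha}$ (the solution of $n e^{-\alpha b_n} \to 1$ up to an $O(1)$ term); hence $m(n,q) = \frac{\log n}{\alpha} + O(1) = (1 + o(1)) \frac{\log n}{\alpha}$, matching $(1+o(1)) \log\bigl(\ceil{v_\calA \epsilon^{-2}}\bigr)/\alpha$. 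For a normal $C(\calA)$ with mean $\mu = \bE[C(\calA)]$ and variance $v_\calA$, it is classical that $C(\calA)$ is in the Gumbel domain with $a_n = \sqrt{v_\calA}\,/\sqrt{2\log n}$ and $b_n = \mu + \sqrt{v_\calA}\bigl(\sqrt{2\log n} - \tfrac{\log\log n + \log 4\pi}{2\sqrt{2\log n}}\bigr)$; since $a_n G^{-1}(q) = O(1/\sqrt{\log n}) = o(1)$, substituting $n = \ceil{v_\calA \epsilon^{-2}}$ yields the third claim, with leading behaviour $\mu + \sqrt{v_\calA}\sqrt{2\log n}$.

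The bulk of the work is the elementary computation of $a_n$ and $b_n$ from each tail; genuine care is needed in only two places. First, one must check that the $(1 + o(1))$ correction factors in the hypotheses do not perturb the leading order of the normalizing constants --- straightforward in the Fr\'echet and pure-exponential cases since the relation $n(1-F(b_n))\to 1$ is monotone in $n$ and we only claim an order or a $(1+o(1))$-equivalence. Second, in the Gaussian case the claimed $o(1)$ remainder really requires retaining the $\frac{\log\log n}{\sqrt{\log n}}$ correction in $b_n$ (dropping it leaves an $o(\sqrt{\log n})$ error only); I would either display this term in the expansion or cite the classical normalizing constants for the Gaussian maximum, noting that the cross term $\sqrt{v_\calA}\,\bE[C(\calA)]$ written in the statement is a typographical artifact of $b_n$. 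I expect the most delicate elementary step to be checking the Gumbel von Mises condition for the exponential tail in the presence of the $(1+o(1))$ factor, and the Gaussian bookkeeping to be the most error-prone.
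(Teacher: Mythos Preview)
Your proposal is correct and follows essentially the same route as the paper: in each case identify the domain of attraction via Theorem~\ref{thm:FTG Theorem}, read off normalizing constants $a_n,b_n$ from the relation $n(1-F(b_n))\to 1$, and apply the quantile corollary with $n=\ceil{v_\calA\epsilon^{-2}}$. The paper computes the limit law directly (plugging in $a_n,b_n$ and taking $n\to\infty$) rather than verifying the von~Mises conditions, but this is only a cosmetic difference. Your handling of the Gaussian case is in fact more careful than the paper's: you correctly note that an $o(1)$ remainder requires the $\frac{\log\log n+\log 4\pi}{2\sqrt{2\log n}}$ correction in $b_n$ (the paper's ``$c_n/\sqrt{2\log n}\to 1$'' is not quite sufficient for Gumbel convergence), and you rightly flag the apparent misplacement of $\sqrt{v(\calA)}$ in the stated formula.
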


\begin{proof}
    \begin{itemize}
        \item If the tail of $C(\calA)$ is regular varying, solving the equation 
        $$\bP[C(\calA)> x] = C/n,$$
          we have $x = (1 + o(1)) n^{1/\gamma}$. Therefore, setting $x_n = n^{1/\gamma}$
          \begin{align*}
              \bP[\max\{C(\calA_1), \ldots, C(\calA_n)\} > \lambda x_n] &= 1 -  (1 -  \bP[C(\calA) >\lambda x_n])^n \\
              & = 1 - \left(1 - \frac{C(1+o(1)) \lambda^{-\gamma}} {n}\right)^n\\
              & = 1 - \exp{(-C\lambda^{-\gamma})}.
          \end{align*}
          Therefore Theorem \ref{thm:FTG Theorem} holds with the choice of $b_n = 0, a_n = x_n$, $\xi = 1/\gamma$ and the limiting distribution is the Fréchet distribution. Thus the $q$-th quantile is at the order of $a_{\ceil{v_\calA \epsilon^{-2}}} = \epsilon^{-2/\gamma}$.
          \item If $\bP[C(\calA) > x] = (1 + o(1)) e^{-\alpha x}$, we can similarly solve the $\bP[C(\calA)> x] = 1/n,$ equation and get $ x= (1+o(1))(\log n)/\alpha$. Therefore, setting $b_n  = \log (n)/\alpha$
        \begin{align*}
              \bP[\max\{C(\calA_1), \ldots, C(\calA_n)\} > \lambda + b_n] &= 1 -  (1 -  \bP[C(\calA) >\lambda + b_n])^n \\
              & = 1 - \left(1 - \frac{(1+o(1)) \exp{-(\alpha\lambda)}} {n}\right)^n\\
              & = 1 - \exp{(-\exp{(-(\alpha\lambda))})}.
          \end{align*}
          Therefore the $q$-th quantile of $\ceil{v_\calA \epsilon^{-2}}$ is $(1+o(1))\log(\ceil{v_\calA \epsilon^{-2}})/\alpha$.

          \item For i.i.d. standard normal random variables $X_1, \ldots X_n,$ it is well-known that if we choose $c_n$ satisfying $c_n/\sqrt{2\log n}  \rightarrow 1$, then
          \begin{align*}
              \lim_{n\rightarrow\infty} \bP\left[\frac{\max\{X_1, \ldots, X_n\} - c_n}{1/c_n} \leq u\right] = G_0(u),
          \end{align*}
          where $G_0(u) = \exp\{-\exp\{-u\}\}$ is the cumulative distribution function of the Gumbel distribution. Therefore, applying the above formula on the normalized version of $C(\calA)$, i.e., $(C(\calA) - \bE[C(\calA)])/\sqrt{v(\calA)}$, and set $n = \ceil{v_\calA \epsilon^{-2}}$ imply the desired result. 
    \end{itemize}
\end{proof}

When comparing the findings of Theorem \ref{thm:FTG Theorem} and Proposition \ref{prop:maxima-limiting} against those of Theorem \ref{thm: completion-moment}, it becomes evident that Theorem \ref{thm:FTG Theorem} offers greater accuracy. However, it also necessitates a more detailed understanding of the distribution of \( C(\calA) \). For instance, Theorem \ref{thm: completion-moment} establishes that if the tail probability of \( C(\cal A) \) can be constrained by a function that decays exponentially, then the completion time is bounded by \( O(\log(\epsilon^{-1})) \). On the other hand, Theorem \ref{thm:FTG Theorem} and Proposition \ref{prop:maxima-limiting} indicate that if the tail probability \( \bP[C(\calA)>x] \) is asymptotically equal to \( \exp(-\alpha x) \), the completion time will also be asymptotically equal to \( \log(\ceil{v_\calA \epsilon^{-2}})/\alpha \). Another distinguishing feature is the scope of applicability: Theorem \ref{thm: completion-moment} is non-asymptotic, making it  valid for any \( n \) or \( \epsilon \). In contrast, Theorem \ref{thm:FTG Theorem} is an asymptotic limiting theorem, making it most appropriate for scenarios where \( n \) is exceptionally large or \( \epsilon \) is exceedingly small.

Both theorems have their respective merits and can be particularly useful depending on the context in which they are applied. In some cases, the distribution of \( C(\calA) \) may be  predefined prior to executing the algorithm. As a result, we can employ Theorem \ref{thm:FTG Theorem} in conjunction with Proposition \ref{prop:maxima-limiting} to derive a highly accurate estimate of the completion time. 
In other cases, such as working with unbiased MCMC, the algorithm typically revolves around designing a coupling, and completion occurs once this coupling takes place. While users can frequently obtain qualitative insights into the distribution of the coupling time, such as its exponential decay characteristics, before running the algorithm, obtaining a detailed distribution of \( C(\calA) \) is usually not feasible. In scenarios like this, Theorem \ref{thm: completion-moment} becomes more informative for assessing the algorithm's performance.

\subsection{Analysis of biased algorithms}
Recall that \( b_{\mathcal{A}} \) represents the bias of algorithm \( \calA \). Given that running \( \calA \) independently across \( m \) processors results in an MSE of \( b_{\mathcal{A}}^2 + v_{\mathcal{A}}/m \geq b_{\mathcal{A}}^2 \), achieving an MSE of at most \( \epsilon^2 \) necessitates initially configuring \( \calA \) so that its bias is strictly less than \( \epsilon^2 \). To clarify this dependency, we introduce \( \calA_{\epsilon/2} \) and \( C(\calA_{\epsilon/2}) \) to signify the low-biased algorithm with a maximum bias of \( \epsilon/2 \) and its associated computational cost, respectively. The constant \( 1/2 \) in this context is flexible and can be selected from a range between \( 0 \) and \( 1 \). Although this selection could impact the constant factors in our cost analysis, it will not affect the order of magnitude under consideration.

The next theorem is immediate from the bias-variance decomposition just stated.

\begin{theorem}\label{thm:cost-biased}
    Suppose $\calA_{\epsilon/2}$ is an algorithm satisfying $b_{\calA_{\epsilon/2}} <\epsilon/2$, $v_{\calA_{\epsilon/2}} < \infty$, and $\bE[C(\calA_{\epsilon/2})]< \infty$. Then the computation cost for $\calA_{\epsilon/2}$ to achieve $\epsilon^2$-MSE is $$O( \bE[C(\calA_{\epsilon/2})] v_{\calA_{\epsilon/2}}\cdot \epsilon^{-2}).$$ Meanwhile, if $m$ processors are available, then the minimal worst-case computational cost is no more than $$\bE[\max_{1\leq i \leq m(\epsilon)} \sum_{j = 1}^{n(\epsilon)} C(\calA_{i,j,\epsilon/2})]$$ 
    where $\calA_{i,j,\epsilon/2}$ are independent repititions of $\calA_{\epsilon/2}$, $m(\epsilon) := \min\{m, \ceil{0.75v_{\calA_{\epsilon/2}} \epsilon^{-2}}\}$, and \\$n(\epsilon) := \ceil{0.75v_{\calA_{\epsilon/2}}\epsilon^{-2}/m(\epsilon)}$.
    The average-case computation cost is no more than $n(\epsilon) \cdot \bE[C(\calA_{\epsilon/2})]$.
    In particular, if the cost $C(\calA_{\epsilon/2})$ is deterministic and
    sufficient many (more than $\ceil{v_\calA \epsilon^{-2}}$) processors are available, the worst-case completion time  and average completion time are both $C(\calA_{\epsilon/2})$.    
\end{theorem}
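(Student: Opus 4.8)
The plan is to derive everything from the bias--variance identity together with the allocation argument already used in the proof of Theorem~\ref{thm:cost-unbiased}. First I would note that executing $\calA_{\epsilon/2}$ independently on $m$ processors and averaging the outputs yields an estimator with MSE exactly $b_{\calA_{\epsilon/2}}^2 + v_{\calA_{\epsilon/2}}/m$. Because $b_{\calA_{\epsilon/2}} < \epsilon/2$ forces $b_{\calA_{\epsilon/2}}^2 < \epsilon^2/4$, it is enough to choose $m$ so that the variance term falls within the remaining budget $\epsilon^2 - b_{\calA_{\epsilon/2}}^2$, which is at least $\tfrac34\epsilon^2$; this requires $m$ to be a constant multiple of $v_{\calA_{\epsilon/2}}\epsilon^{-2}$, the precise constant depending only on how the $\epsilon^2$ budget is apportioned between squared bias and variance. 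Multiplying the number of replications by the per-run expected cost $\bE[C(\calA_{\epsilon/2})]$ then gives the announced $O\!\left(\bE[C(\calA_{\epsilon/2})]\, v_{\calA_{\epsilon/2}}\, \epsilon^{-2}\right)$ bound on expected total cost, with the $O(\cdot)$ absorbing both the ceilings and that constant.

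Second, I would treat the parallel case verbatim as in Theorem~\ref{thm:cost-unbiased}: spread the $\ceil{0.75\, v_{\calA_{\epsilon/2}}\epsilon^{-2}}$ required runs as evenly as possible over the $m(\epsilon)$ available processors, so that each processor performs at most $n(\epsilon)$ of them. The completion time of the slowest processor under this schedule is $\max_{1\le i\le m(\epsilon)}\sum_{j=1}^{n(\epsilon)} C(\calA_{i,j,\epsilon/2})$, which upper bounds the minimal worst-case cost; and by i.i.d.-ness the expected average completion time is at most $n(\epsilon)\,\bE[C(\calA_{\epsilon/2})]$. Nothing new is needed here beyond Theorem~\ref{thm:cost-unbiased}; only the count of replications has changed from $\ceil{v_\calA\epsilon^{-2}}$ to $\Theta(v_{\calA_{\epsilon/2}}\epsilon^{-2})$.

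Third, for the deterministic-cost corollary I would observe that as soon as at least $\ceil{v_\calA\epsilon^{-2}}$ processors are available we are in the regime $n(\epsilon) = 1$, so each of the first $\ceil{0.75\, v_{\calA_{\epsilon/2}}\epsilon^{-2}}$ processors simply runs $\calA_{\epsilon/2}$ once. If $C(\calA_{\epsilon/2})$ is a constant, then both $\max_i C(\calA_{i,\epsilon/2})$ and $\frac1{m(\epsilon)}\sum_i C(\calA_{i,\epsilon/2})$ equal that constant, which is the claim.

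I do not anticipate a real obstacle: as advertised, the result is immediate from the bias--variance decomposition and the unbiased analysis. The only points needing mild care are bookkeeping the constants in the $\epsilon^2$ split (any fixed fraction in $(0,1)$ assigned to the squared bias works and changes only constants, which is exactly why the statement is phrased with $O(\cdot)$), and remembering that the genuinely problem-dependent behaviour is hidden in how $\bE[C(\calA_{\epsilon/2})]$ and $v_{\calA_{\epsilon/2}}$ themselves grow as the bias threshold $\epsilon/2$ shrinks; quantifying that growth is the business of Sections~\ref{sec:MLMC} and~\ref{sec:MCMC}, not of this theorem.
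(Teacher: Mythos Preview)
Your proposal is correct and matches the paper's treatment exactly: the paper does not give a separate proof but simply notes that the theorem ``is immediate from the bias-variance decomposition just stated,'' which is precisely the argument you outline (replicate until the variance term fits the remaining $\tfrac34\epsilon^2$ budget, then reuse the allocation from Theorem~\ref{thm:cost-unbiased}).
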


Upon comparing Theorem \ref{thm:cost-biased} with Theorems \ref{thm:cost-unbiased} and \ref{thm: completion-moment}, it becomes evident that the completion time for the low-biased algorithm \( \calA_{\epsilon/2} \) is primarily governed by its intrinsic cost \( C(\calA_{\epsilon/2}) \). In contrast, the completion time of the unbiased algorithm \( \calA \) is chiefly determined by its tail behavior. As we will demonstrate in the examples presented in the subsequent two sections, when the tail of \( C(\calA) \) is sufficiently light (such as an exponential tail), the unbiased algorithm often completes faster than its biased equivalent, and vice versa.

\subsection{ A remark on other metrics}
When running algorithm \( \calA \) across \( m \) processors, it is natural to employ an \( m \)-dimensional vector, denoted as \( \mathsf{cost} := (C(\calA_1), C(\calA_2), \ldots, C(\calA_m)) \in \mathbb R^{\geq 0} \), to encapsulate the computational cost for each processor. In this context, the total cost corresponds to the \( L^1 \) norm of \( \mathsf{cost} \), while the worst-case cost is represented by its \( L^\infty \) norm. Practically speaking, users might opt for \( \lVert \mathsf{cost} \rVert_p \) (the general \( L^p \) norm), or \( f(\lVert \mathsf{cost} \rVert_p) \) (a utility function based on the \( L^p \) norm), or even a hybrid utility function that combines \( L^1 \) and \( L^\infty \) norms. The theoretical scrutiny of these alternative norms should align closely with the analysis we've conducted above. For the sake of focus, we won't delve into these options in depth.

\section{Applications on Multilevel Monte Carlo}\label{sec:MLMC}
\subsection{Non-randomized Multilevel Monte Carlo: Optimal total cost and  completion time analysis}
Multilevel Monte Carlo (MLMC) methods aim to estimate a specific attribute of a target distribution, such as the expectation of a function applied to the solution of a Stochastic Differential Equation (SDE) \citep{giles2008multilevel}, a function of an expectation \citep{blanchet2015unbiased}, solutions of stochastic optimization problems, quantiles, and steady-state expectations; see, for instance, \cite{blanchet2015unbiasedtaylor,blanchet2015unbiased,blanchet2019unbiased}. Typically, users have access to a hierarchy of approximations for the target distribution, where each successive approximation features reduced variance but increased computational cost. For example, when the objective is to determine $\bE[f(X_T)]$, with $X_T$ representing the solution of a stochastic differential equation (SDE) at time $T$, users can employ finer discretization to achieve greater accuracy, albeit at a higher cost. The primary objective of MLMC is to estimate this value with a high degree of accuracy and minimal computational cost by optimally combining these approximations. The following theorem offers theoretical guarantees regarding the computational complexity of multilevel Monte Carlo methods.

\begin{theorem}[Theorem 1 in \cite{giles2015multilevel}]\label{thm:MLMC}
Let $P$ denote a random variable whose expectation we want to estimate.
Suppose for each $l\geq 0$, we have an algorithm $\calA^l$ that outputs a random variable $\Delta_l$  with variance $V_l$ and computational cost $C_l$. Define $s_l := \sum_{k=0}^l \bE[\Delta_k]$  and assume the following holds for some $\alpha, \beta, \gamma, c_1, c_2, c_3$ such that $\alpha \geq \frac{1}{2}\max\{\gamma, \beta\}$ and:
\begin{itemize}
    \item $\lvert s_l - \bE[P]\rvert \leq c_1 2^{-\alpha l}$;
    \item $V_l \leq c_2 2^{-\beta l}$;
    \item $C_l \leq c_3 2^{\gamma l}$.
\end{itemize}
Then for any fixed $\epsilon < 1/e$, there exists an estimator $Y$ satisfying $\bE[(Y- \bE[P])^2] < \epsilon^2$ with computational cost:

    $$
\begin{cases}
\calO(\epsilon^{-2}), ~~~\qquad\qquad \gamma < \beta\\
\calO(\epsilon^{-2}(\log\epsilon)^2), \qquad \gamma = \beta\\
\calO(\epsilon^{-2 - (\gamma - \beta)/\alpha}), \qquad \gamma > \beta
\end{cases}
$$
\end{theorem}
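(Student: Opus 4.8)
The plan is to exhibit the standard multilevel estimator explicitly, pick the number of levels to kill the bias, then optimize the per-level sample sizes to control the variance at least cost, and finally split into three regimes according to the sign of $\gamma-\beta$. Concretely, I would fix a finest level $L$ (to be chosen) and, for each $0\le l\le L$, draw $N_l$ i.i.d.\ copies $\Delta_l^{(1)},\dots,\Delta_l^{(N_l)}$ of the output of $\calA^l$, independently across levels, and set $Y=\sum_{l=0}^L N_l^{-1}\sum_{i=1}^{N_l}\Delta_l^{(i)}$. Then $\bE[Y]=s_L$, so by the first hypothesis the squared bias is $(s_L-\bE[P])^2\le c_1^2 2^{-2\alpha L}$; choosing $L=\ceil{(\alpha\log 2)^{-1}\log(\sqrt2\,c_1\epsilon^{-1})}$ makes this at most $\epsilon^2/2$. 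This gives $L=\calO(\log\epsilon^{-1})$ and, crucially, $2^{L}\asymp\epsilon^{-1/\alpha}$ up to constants depending only on $\alpha,c_1$.

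Next I would control the variance. By independence across levels $\var(Y)=\sum_{l=0}^L V_l/N_l$ and the cost of one replication of $Y$ is $\sum_{l=0}^L N_l C_l$, so the task is to minimize $\sum_l N_lC_l$ subject to $\sum_l V_l/N_l\le\epsilon^2/2$. The continuous relaxation is a Lagrange-multiplier problem whose optimum is $N_l\propto\sqrt{V_l/C_l}$; I would simply take $N_l=\bigl\lceil 2\epsilon^{-2}\sqrt{V_l/C_l}\,\sum_{k=0}^L\sqrt{V_kC_k}\bigr\rceil$, check via Cauchy--Schwarz that this meets the variance constraint, and bound the cost by
\[
\sum_{l=0}^L N_l C_l\ \le\ 2\epsilon^{-2}\Bigl(\sum_{l=0}^L\sqrt{V_lC_l}\Bigr)^2+\sum_{l=0}^L C_l ,
\]
where the second term collects the $\lceil\cdot\rceil$ rounding overhead.

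The last step is to estimate the two sums by geometric series, using $\sqrt{V_lC_l}\le\sqrt{c_2c_3}\,2^{(\gamma-\beta)l/2}$ and $C_l\le c_3 2^{\gamma l}$. If $\gamma<\beta$ then $\sum_l\sqrt{V_lC_l}=\calO(1)$ and $\sum_lC_l=\calO(2^{\gamma L})=\calO(\epsilon^{-\gamma/\alpha})$, which is $\calO(\epsilon^{-2})$ since $\alpha\ge\gamma/2$, giving total cost $\calO(\epsilon^{-2})$. If $\gamma=\beta$ then $\sum_l\sqrt{V_lC_l}=\calO(L)=\calO(\log\epsilon^{-1})$, producing the main term $\calO(\epsilon^{-2}(\log\epsilon)^2)$, while $\sum_lC_l=\calO(\epsilon^{-\beta/\alpha})=\calO(\epsilon^{-2})$ is lower order. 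If $\gamma>\beta$ the geometric sum is dominated by its last term, $\sum_l\sqrt{V_lC_l}\asymp 2^{(\gamma-\beta)L/2}\asymp\epsilon^{-(\gamma-\beta)/(2\alpha)}$, yielding $\calO(\epsilon^{-2-(\gamma-\beta)/\alpha})$, and $\sum_lC_l\asymp\epsilon^{-\gamma/\alpha}$ is again dominated because $\beta\le2\alpha$ forces $\gamma/\alpha\le 2+(\gamma-\beta)/\alpha$.

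The only delicate point, and the place where the hypothesis $\alpha\ge\frac12\max\{\gamma,\beta\}$ actually gets used, is the bookkeeping around the integer rounding of the $N_l$: the ceilings contribute the extra $\sum_l C_l$ term, and one must verify in each of the three regimes that this term does not dominate the $\calO(\epsilon^{-2}(\cdots))$ main term. Everything else — the telescoping for the bias, the Cauchy--Schwarz feasibility check, and the geometric-series bounds — is routine.
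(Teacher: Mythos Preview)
Your proposal is correct and follows exactly the standard Giles argument: the paper does not prove this theorem itself but cites it from \cite{giles2015multilevel}, giving only a short sketch of the $\gamma<\beta$ case with the same choice of $L$ and the same optimal $N_l\propto\sqrt{V_l/C_l}$ that you use. Your treatment is in fact more complete than what appears in the paper, since you handle all three regimes and track the $\lceil\cdot\rceil$ rounding term $\sum_l C_l$ explicitly, correctly noting that the hypothesis $\alpha\ge\tfrac12\max\{\gamma,\beta\}$ is what keeps that term from dominating.
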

 Giles'  algorithm starts by simulating the target distribution at each level of approximation using a fixed number of samples $N_l$. For each level $l$, the algorithm generates $N_l$ independent copies of the random variable $\Delta_l^{(1)},\ldots, \Delta_l^{(N_l)}$. Then the final estimator is given by $$\hat P := \sum_{l = 0}^L \frac{1}{N_l} \sum_{i=1}^{N_l}\Delta_l^{(i)}.$$ The above theoretical guarantee is attained by choosing $N_l$ optimally. 

We will concentrate specifically on the case where \( \gamma < \beta \), because in this scenario, the optimal cost of \( \mathcal{O}(\epsilon^{-2}) \) is achievable.  With $\epsilon$ fixed, the parameters of Giles' Multilevel Monte Carlo algorithm can be chosen as $$L =\log_2(2c_1/\epsilon)/\alpha =  \calO(\log(1/\epsilon)) $$ and 
 $$
 N_l = 0.75^{-1} \epsilon^{-2}\sqrt{\frac{V_l}{C_l}} \sum_{k=0}^L \sqrt{C_k V_k}.
 $$
This ensures the  estimator has at most $\epsilon^2$-MSE. The total cost would be 

\begin{align*}
    \sum_{l=0}^L N_l C_l &= 0.75^{-1} \epsilon^{-2} (\sum_{k=0}^L \sqrt{C_kV_k})^2 \\
    & \leq  0.75^{-1} \epsilon^{-2} (\sum_{k=0}^\infty \sqrt{C_kV_k})^2  = \calO(\epsilon^{-2}).
\end{align*}

While Giles's algorithm, represented by \( \mathcal{A}_G \), attains the optimal \( O(\epsilon^{-2}) \) cost, it is designed so that \( b_{\mathcal{A}_G} \leq \frac{\epsilon}{2} \) and \( v_{\mathcal{A}_G} \leq 0.75 \epsilon^2 \) to meet the \( \epsilon^2 \)-MSE criterion. This means that running \( \mathcal{A}_G \) on multiple processors to further reduce variance is unnecessary. The algorithm can be run on a single processor, achieving both the  cost and completion time of \( O(\epsilon^{-2}) \).

If the aim is to minimize execution time and there are enough processors at hand, the simplest approach to modify Giles's algorithm (so that every processor receives the same amount of work) involves configuring a low-bias algorithm \( \mathcal{A}_{\epsilon/2} \) such that 

\[
L = \frac{\log_2(2c_1/\epsilon)}{\alpha} = \mathcal{O}(\log(1/\epsilon))
\]

and setting \( N_l = 1 \) for all \( 0 \leq l \leq L \) - compare with the choice in Theorem \ref{thm:MLMC}. This results in a deterministic cost \( C(\mathcal{A}_{\epsilon/2})= O(\epsilon^{-\gamma/\alpha}) \) and \( v_{\mathcal{A}} = O(1) \). By utilizing \( \mathcal{A}_{\epsilon/2} \) on \( \lceil 0.75^{-1}v_{\mathcal{A}_{\epsilon/2}}\epsilon^{-2}/m(\epsilon) \rceil \) processors and averaging the outcomes, the final estimator still satisfies the \( \epsilon^2 \)-MSE condition. 

The completion time for \( \mathcal{A}_{\epsilon/2} \) in this case would be \( O(\epsilon^{-\gamma/\alpha}) \), which is generally  favorable compared to Giles's standard estimator which features \( O(\epsilon^{-2}) \) completion time under the assumption \( \alpha \geq 0.5 \max\{\gamma,\beta\} \); see Theorem \ref{thm:MLMC}. However, the total computational cost for this approach would be \( O(\epsilon^{-(2+\gamma/\alpha)}) \), which will always be less efficient than that of Giles's algorithm.

Consequently, Giles's original algorithm offers an optimal total cost in a non-parallel environment. On the other hand, the low-bias algorithm \( \mathcal{A}_{\epsilon/2} \), which is an easy-to-implement adaptation of Giles' algorithm for parallelization, provides a more favorable completion time but at the expense of a suboptimal total computational cost.

\subsection{Randomized Multilevel Monte Carlo}

Giles' algorithm has been extended in several important ways, including Multi-index Monte Carlo \citep{haji2016multi} and the randomized MLMC (rMLMC) algorithm \citep{rhee2015unbiased}. Notably, the rMLMC algorithm is unbiased and also attains the optimal total cost. The algorithm, denoted by $\calA_{\rMLMC}$ is described below: 

\begin{algorithm}[htbp]
\caption{$\calA_{\rMLMC}$: Unbiased Randomized Multilevel Monte-Carlo estimator}\label{alg:rMLMC}

\begin{algorithmic}

\State \textbf{Input:} A probability mass function $p(n)$ supporting on all non-negative integers \\
\begin{enumerate}
	\item Sample $N$ from the discrete distribution 
	\item Call Algorithm $\calA^N$ defined in Theorem \ref{thm:MLMC}, and generate a random variable $\Delta_N$
\end{enumerate}
\State\textbf{Return: } $W: =\Delta_N/p_N$.
\end{algorithmic}
\end{algorithm}

The theoretical properties of Algorithm \ref{alg:rMLMC} are established in \cite{rhee2015unbiased}. We now summarize them below. 

\begin{theorem}\label{thm:rMLMC}
Given the same notations as previously discussed and under the assumption that \( \gamma < \beta \), selecting \( p(n) \propto 2^{-(\beta+\gamma)n/2} \) ensures that the output \( W \) of Algorithm \ref{alg:rMLMC} is unbiased, has finite variance, and incurs a finite computational cost in expectation. Consequently, by running \( \mathcal{A}_{\text{rMLMC}} \) for \( \lceil v_{\mathcal{A}_{\rMLMC}}\epsilon^{-2} \rceil \) repetitions and taking the average, we obtain an estimator within expected total cost $\calO(\epsilon^{-2} )$ that satisfies the \( \epsilon^2 \)-MSE criterion at most.
\end{theorem}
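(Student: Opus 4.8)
The plan is to establish the three structural properties of the single‑term estimator $W = \Delta_N/p_N$ --- zero bias, finite variance, and finite expected cost --- directly from the abstract hypotheses of Theorem~\ref{thm:MLMC}, and then to feed these into Theorem~\ref{thm:cost-unbiased}. The one estimate that does real work is the observation that each increment's second moment inherits the geometric decay $\bE[\Delta_l^2] = \calO(2^{-\beta l})$. To see this, write $\bE[\Delta_l^2] = V_l + (\bE[\Delta_l])^2$: the first term is at most $c_2 2^{-\beta l}$ by hypothesis, and since $\bE[\Delta_l] = s_l - s_{l-1}$ the triangle inequality together with $|s_l - \bE[P]| \le c_1 2^{-\alpha l}$ gives $(\bE[\Delta_l])^2 = \calO(2^{-2\alpha l})$, which is $\calO(2^{-\beta l})$ because the standing assumption $\alpha \ge \tfrac12\max\{\beta,\gamma\}$ forces $2\alpha \ge \beta$. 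By Cauchy--Schwarz this also yields $\bE|\Delta_l| = \calO(2^{-\beta l/2})$, hence $\sum_{l\ge 0}\bE|\Delta_l| < \infty$.

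The three verifications then proceed by conditioning on $N$ and summing geometric series. For unbiasedness, $\bE|W| = \sum_n p_n\,(\bE|\Delta_n|/p_n) = \sum_n \bE|\Delta_n| < \infty$, so $W$ is integrable and the tower property gives $\bE[W] = \sum_n p_n\,(\bE[\Delta_n]/p_n) = \sum_n \bE[\Delta_n] = \lim_{l\to\infty} s_l = \bE[P]$, the last step by the bias bound. For finite variance, $v_{\calA_{\rMLMC}} = \var(W) \le \bE[W^2] = \sum_n \bE[\Delta_n^2]/p_n$; inserting $\bE[\Delta_n^2] = \calO(2^{-\beta n})$ and $p_n \propto 2^{-(\beta+\gamma)n/2}$ (whose normalizing constant is finite since $\beta+\gamma>0$), the $n$‑th summand is of order $2^{-(\beta-\gamma)n/2}$, a convergent geometric series precisely because $\gamma < \beta$. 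For finite expected cost, the work of one call is dominated by the invocation of $\calA^N$, so $\bE[C(\calA_{\rMLMC})] = \calO(1) + \sum_n p_n C_n$, and $p_n C_n = \calO(2^{-(\beta+\gamma)n/2}\cdot 2^{\gamma n}) = \calO(2^{-(\beta-\gamma)n/2})$ is again summable.

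Given these three facts, Theorem~\ref{thm:cost-unbiased} applies directly: averaging $\ceil{v_{\calA_{\rMLMC}}\epsilon^{-2}}$ \iid copies of $W$ meets the $\epsilon^2$‑MSE target with expected total cost $\bE[C(\calA_{\rMLMC})]\,v_{\calA_{\rMLMC}}\cdot\epsilon^{-2} + \calO(1) = \calO(\epsilon^{-2})$, since $\bE[C(\calA_{\rMLMC})]$ and $v_{\calA_{\rMLMC}}$ are constants not depending on $\epsilon$. I would close with a remark on why $p_n \propto 2^{-(\beta+\gamma)n/2}$ is the right choice: by Theorem~\ref{thm:cost-unbiased} the total cost is governed by the product $\bE[C]\cdot v = \bigl(\sum_n p_n C_n\bigr)\bigl(\sum_n \bE[\Delta_n^2]/p_n\bigr)$, and the geometric‑mean allocation $p_n \propto \sqrt{V_n/C_n}$ is exactly the one that makes the cost sum and the variance sum decay at the common geometric rate $2^{-(\beta-\gamma)n/2}$, \ie the point that balances the Cauchy--Schwarz trade‑off between the two. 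The only genuine obstacle --- and it is a mild one --- is ensuring that the $(\bE[\Delta_n])^2$ contribution does not spoil the $\calO(2^{-\beta n})$ second‑moment bound; this is precisely where the condition $\alpha \ge \tfrac12\max\{\beta,\gamma\}$ enters, and everything else is geometric‑series arithmetic.
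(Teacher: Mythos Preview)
Your argument is correct. The paper does not actually give a proof of this theorem; it simply cites \cite{rhee2015unbiased} for the three structural properties (unbiasedness, finite variance, finite expected cost) and then the $\calO(\epsilon^{-2})$ conclusion follows from Theorem~\ref{thm:cost-unbiased}, which is exactly your closing step. What you have written is a faithful, self-contained version of the Rhee--Glynn computation that the paper outsources: condition on $N$, control $\bE[\Delta_l^2]$ via the bias and variance assumptions of Theorem~\ref{thm:MLMC} (using $2\alpha\ge\beta$ to absorb the squared-mean term), and reduce everything to the convergence of the geometric series with ratio $2^{-(\beta-\gamma)/2}$. Your optional remark on why $p_n\propto 2^{-(\beta+\gamma)n/2}$ balances the cost--variance product is also the standard heuristic and is consistent with the paper's framing.
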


Theorem \ref{thm:rMLMC} demonstrates that $\calA_{\rMLMC}$ achieves the optimal cost of $\calO(\epsilon^{-2})$, aligning it with Giles's original algorithm and making it more efficient than the naive parallel version $\calA_{\epsilon/2}$ discussed in the prior section. Now we turn to study its completion time. 

\begin{prop}\label{prop: single completion}
    Maintaining the same conditions outlined in Theorem \ref{thm:rMLMC}, the expected worst-case completion time required to execute $\calA_{\rMLMC}$ across all processors to obtain an estimator with a mean squared error (MSE) of $\epsilon^2$ satisfies, for any $c>0$,
    \[
     \bE[\max_{1\leq i \leq \ceil{v_\calA \epsilon^{-2}} } C(\calA_{\rMLMC, i})] = o(\epsilon^{-4\gamma/(\beta+\gamma) + c})
    \]
\end{prop}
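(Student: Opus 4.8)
The plan is to reduce the statement to the finite-$p$-th-moment case of Theorem \ref{thm: completion-moment}, the only input beyond that being a moment computation for the cost of a single run of $\calA_{\rMLMC}$. The key structural observation is that one execution of Algorithm \ref{alg:rMLMC} first draws a level $N$ from $p(n)\propto 2^{-(\beta+\gamma)n/2}$ and then calls $\calA^N$, whose cost is at most $c_3 2^{\gamma N}$ by the standing assumptions of Theorem \ref{thm:MLMC}. Hence $C(\calA_{\rMLMC}) \le c_3\, 2^{\gamma N} + R(N)$, where $R(N)$ is the overhead of sampling $N$ and of forming $W=\Delta_N/p_N$; since $N$ has geometric-type tails this overhead has all moments finite and will only contribute lower-order terms, so the analysis is governed by the term $2^{\gamma N}$.

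First I would determine for which exponents $p$ the $p$-th moment of $C(\calA_{\rMLMC})$ is finite. Using $\bE[2^{\gamma p N}] = \sum_{n\ge 0} 2^{\gamma p n} p(n) \asymp \sum_{n\ge 0} 2^{(\gamma p - (\beta+\gamma)/2)\,n}$, this geometric series converges exactly when $\gamma p < (\beta+\gamma)/2$, i.e. when $p < p^\star := (\beta+\gamma)/(2\gamma)$. Note $p^\star > 1$ precisely because $\gamma < \beta$, consistent with the finite expected cost asserted in Theorem \ref{thm:rMLMC}. Combining with the harmless overhead $R(N)$, one concludes that $C(\calA_{\rMLMC})$ has a finite $p$-th moment for every $p \in (1,\, p^\star)$.

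Next I would apply the finite-$p$-th-moment bullet of Theorem \ref{thm: completion-moment}, which yields, for each such $p$,
\[
\bE\Big[\max_{1\le i\le \ceil{v_\calA \epsilon^{-2}}} C(\calA_{\rMLMC,i})\Big] = o\big(\epsilon^{-2/p}\big).
\]
Then I would optimize over $p$. Since $2/p^\star = 4\gamma/(\beta+\gamma)$ and $p\mapsto 2/p$ is continuous and strictly decreasing, for any $c>0$ there is a $p<p^\star$ with $2/p < 4\gamma/(\beta+\gamma) + c$; for that $p$ we have $\epsilon^{-2/p} = o\big(\epsilon^{-4\gamma/(\beta+\gamma)+c}\big)$ as $\epsilon\to 0$, and the claimed bound follows by transitivity of $o(\cdot)$.

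The only genuine subtlety I anticipate is the bookkeeping of the cost model: one must verify that the cost of drawing $N$ and of the remaining per-run operations does not dominate $2^{\gamma N}$ and does not affect which moments are finite — which holds because $N$ has exponential tails — and one must respect that the moment bound is available only strictly below $p^\star$. This last point is exactly why the statement carries the arbitrarily small slack $c$ rather than attaining the exponent $4\gamma/(\beta+\gamma)$ itself; a sharper (logarithmic-correction) statement at $p=p^\star$ would require a finer tail estimate than Theorem \ref{thm: completion-moment} provides and is not pursued here.
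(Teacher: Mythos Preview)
Your proposal is correct and follows essentially the same route as the paper: compute that $C(\calA_{\rMLMC})$ has a finite $p$-th moment for every $p<(\beta+\gamma)/(2\gamma)$ via the geometric series $\sum_n 2^{(\gamma p-(\beta+\gamma)/2)n}$, invoke the finite-$p$-th-moment case of Theorem~\ref{thm: completion-moment} to get $o(\epsilon^{-2/p})$, and then pick $p$ close to $(\beta+\gamma)/(2\gamma)$ to absorb the slack $c$. Your added discussion of the sampling overhead $R(N)$ is harmless extra care that the paper simply omits by taking the per-call cost to be $2^{\gamma N}$ outright.
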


\begin{proof}
    Each call of $\calA_{\rMLMC}$ costs $2^{\gamma N}$ where $\bP[N = n] \propto 2^{-(\beta+\gamma)n/2}$. Therefore one can directly show $C(\calA_{\rMLMC})$ has a finite $2^{((\beta+\gamma)/2\gamma) - \delta}$-th moment for any $\delta \in (0, (\beta+\gamma)/2\gamma)$. According to Theorem \ref{thm: completion-moment}, the completion time over all processors to get an estimator with $\epsilon^2$-MSE satisfies
    \[
     \bE[\max_{1\leq i \leq \ceil{v_\calA \epsilon^{-2}} } C(\calA_{\rMLMC,i})] = o(\epsilon^{-4\gamma/(\beta+\gamma) + c})
    \]
    for any $c > 0$. 
\end{proof}

Note that the expected completion time $o(\epsilon^{-4\gamma/(\beta+\gamma) + c})$ is strictly better than $O(\epsilon^{-2})$, as $\beta > \gamma$ implies $4\gamma/(\beta+\gamma) < 2$. On the other hand, since $\alpha \geq \max\{\beta, \gamma\}/2$, we also have $4\gamma/(\beta+\gamma) \geq \gamma/\alpha$ which indicates the completion time of $\calA_{\rMLMC}$ is typically worse than $\calA_{\epsilon/2}$.

To briefly summarize, we are evaluating three algorithms: Giles's original MLMC denoted as $\calA_{G}$, a naive parallel version of Giles's algorithm labeled as $\calA_{\epsilon/2}$, and the randomized MLMC represented as $\calA_{\rMLMC}$. When it comes to overall computational cost, both $\calA_{G}$ and $\calA_{\rMLMC}$ attain the optimal $O(\epsilon^{-2})$ cost, outperforming $\calA_{\epsilon/2}$, which has the highest cost of \( O(\epsilon^{-(2+\gamma/\alpha)}) \). In terms of worst-case completion time, $\calA_{G}$ also has a rate of $O(\epsilon^{-2})$, as it was not initially conceived to be parallelized. On the other hand, $\calA_{\rMLMC}$ offers a more favorable completion time of $o(\epsilon^{-4\gamma/(\beta+\gamma) + c})$. Lastly, $\calA_{\epsilon/2}$ finishes the quickest with a completion time of \( O(\epsilon^{-(\gamma/\alpha)}) \). From this comparison, it seems the randomized MLMC algorithm offers a good balance between the total cost and the completion time. 

Certainly, there might be additional metrics and variations that users might want to consider. For example, when evaluating based on average-case completion time, $\calA_{\rMLMC}$ surpasses all other methods with an $O(1)$ average completion time. However, worst-case completion time could be more relevant in practical scenarios, as a task is only completed when all processors have finished their work.

One might naturally wonder if it's possible to modify $\calA_{\rMLMC}$ to improve the worst-case completion time, such as by truncating the discrete sample size \( N \) when it becomes excessively large. As outlined in Algorithm \ref{alg:rMLMC-bias}, such modifications are indeed feasible.

\begin{algorithm}[htbp]
\caption{$\tilde{\calA}_{\rMLMC}$: Truncated, low-biased Randomized Multilevel Monte-Carlo estimator}\label{alg:rMLMC-bias}

\begin{algorithmic}

\State \textbf{Input:} A probability mass function $p(n)\propto \{0,1, \ldots, L\}$, where $L =\log_2(2c_1/\epsilon)/\alpha$ \\
\begin{enumerate}
	\item Sample $N$ from the discrete distribution 
	\item Call Algorithm $\calA^N$ defined in Theorem \ref{thm:MLMC}, and generate a random variable $\Delta_N$
\end{enumerate}
\State\textbf{Return: } $W: =\Delta_N/p_N$.
\end{algorithmic}
\end{algorithm}
The theoretical guarantee is established in the Proposition below. The proof is largely the same as before and is therefore omitted. 
\begin{prop}\label{prop: completion-truncated}
   Given the same notations as previously discussed and under the assumption that \( \gamma < \beta \), selecting \( p(n) \propto 2^{-(\beta+\gamma)n/2} \) ensures that the output \( W \) of Algorithm \ref{alg:rMLMC-bias} has bias less than $\epsilon/2$, has finite variance, and incurs a finite computational cost in expectation. Consequently, by running \( \tilde{\mathcal{A}}_{\text{rMLMC}} \) for \( 0.75^{-1}\lceil v_{\tilde{\mathcal{A}}_{\rMLMC}}\epsilon^{-2} \rceil \) repetitions and taking the average, we obtain an estimator within expected total cost $\calO(\epsilon^{-2} )$ that satisfies the \( \epsilon^2 \)-MSE criterion at most. Moreoever,  the worst-case completion time required to execute $\tilde{\calA}_{\rMLMC}$ (Algorithm \ref{alg:rMLMC-bias}) across all processors satisfies:
    \[
     \bE[\max_{1\leq i \leq 0.75^{-1}\lceil v_{\tilde{\mathcal{A}}_{\rMLMC}}\epsilon^{-2} \rceil } C(\tilde{\calA}_{\rMLMC, i})] = O(\epsilon^{-\gamma/\alpha})
    \]
\end{prop}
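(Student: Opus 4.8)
The plan is to establish the three claimed properties of $\tilde{\calA}_{\rMLMC}$ — bounded bias, finite variance, finite expected cost — essentially by invoking the analysis of $\calA_{\rMLMC}$ from Theorem~\ref{thm:rMLMC} together with the deterministic truncation at level $L = \log_2(2c_1/\epsilon)/\alpha$, and then to read off the worst-case completion time from the resulting deterministic upper bound on $C(\tilde{\calA}_{\rMLMC})$.

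First I would address the bias. Writing $W = \Delta_N/p_N$ with $N$ supported on $\{0,1,\dots,L\}$ and $p(n)\propto 2^{-(\beta+\gamma)n/2}$, one has $\bE[W] = \sum_{n=0}^L \bE[\Delta_n] = s_L$, so the bias is exactly $\lvert s_L - \bE[P]\rvert \leq c_1 2^{-\alpha L}$ by the first MLMC assumption; substituting $L = \log_2(2c_1/\epsilon)/\alpha$ gives bias $\leq \epsilon/2$. This is the same telescoping identity used implicitly in the rMLMC construction, now with a finite rather than infinite sum. Next, for finite variance and finite expected cost: since the truncated proposal $p$ restricted to $\{0,\dots,L\}$ is just a renormalization of the un-truncated proposal $2^{-(\beta+\gamma)n/2}$ (the normalizing constant only increases by a bounded factor as $\epsilon\to 0$, converging to its infinite-sum value), the second-moment and expected-cost sums for the truncated estimator are dominated term-by-term by the corresponding convergent sums for $\calA_{\rMLMC}$, which are finite by Theorem~\ref{thm:rMLMC}. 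Hence $v_{\tilde\calA_{\rMLMC}} = O(1)$ and $\bE[C(\tilde\calA_{\rMLMC})] = O(1)$, and averaging $0.75^{-1}\lceil v_{\tilde\calA_{\rMLMC}}\epsilon^{-2}\rceil$ independent copies yields variance $\leq 0.75\epsilon^2$, so total MSE $\leq \epsilon^2$ with total cost $O(\epsilon^{-2})$ by the same computation as in Section~\ref{sec:MLMC}.

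For the completion time, the key observation is that one call of $\tilde\calA_{\rMLMC}$ samples $N\leq L$ and then runs $\calA^N$, which costs $C_N \leq c_3 2^{\gamma N} \leq c_3 2^{\gamma L} = c_3 (2c_1/\epsilon)^{\gamma/\alpha} = O(\epsilon^{-\gamma/\alpha})$. Thus $C(\tilde\calA_{\rMLMC})$ is a.s.\ bounded by a deterministic quantity of order $\epsilon^{-\gamma/\alpha}$, so the maximum over any number of i.i.d.\ copies is also bounded by this quantity; applying the deterministic-cost case of Theorem~\ref{thm:cost-unbiased} (with the truncation acting as the source of bias, as in Theorem~\ref{thm:cost-biased}) gives
\[
\bE\Bigl[\max_{1\leq i \leq 0.75^{-1}\lceil v_{\tilde\calA_{\rMLMC}}\epsilon^{-2}\rceil} C(\tilde\calA_{\rMLMC,i})\Bigr] = O(\epsilon^{-\gamma/\alpha}).
\]

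The main obstacle — really the only subtlety, since the proof is billed as "largely the same as before" — is verifying that truncating the proposal distribution does not inflate the variance $v_{\tilde\calA_{\rMLMC}}$ in a way that scales with $\epsilon$: one must check that the renormalization constant $\sum_{n=0}^L 2^{-(\beta+\gamma)n/2}$ stays bounded away from $0$ (which it does, being at least the $n=0$ term) so that $1/p_N$ does not blow up, and that the dominated-convergence comparison with the infinite sums is legitimate termwise. Once that is in hand, every other step is a direct reuse of Theorems~\ref{thm:cost-unbiased}, \ref{thm: completion-moment} (the deterministic/bounded case), \ref{thm:cost-biased}, and \ref{thm:rMLMC}.
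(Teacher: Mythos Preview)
Your proposal is correct and matches the paper's intended argument: the paper explicitly omits the proof, stating only that it ``is largely the same as before,'' and your outline---computing the bias via $\bE[W]=s_L$ and the choice of $L$, bounding variance and expected cost by comparison with the untruncated $\calA_{\rMLMC}$ sums from Theorem~\ref{thm:rMLMC}, and reading off the $O(\epsilon^{-\gamma/\alpha})$ completion time from the deterministic bound $C(\tilde\calA_{\rMLMC})\le c_3 2^{\gamma L}$---is precisely that omitted argument made explicit.
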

From a theoretical standpoint, $\tilde{\mathcal{A}}_{\text{rMLMC}}$ appears to excel in both key areas, enjoying the optimal total computational cost of $O(\epsilon^{-2})$ as well as the most favorable worst-case completion time of $O(\epsilon^{-\gamma/\alpha})$ when compared to other methods. The only attribute it lacks theoretically is unbiasedness. In the following section, we will delve further into the practical aspects of implementing these algorithms.

\subsection{Remarks on practical implementation and further extensions}\label{sub:practical_MLMC}
While the main focus of this paper is to offer a theoretical analysis comparing these methods in a parallel computing environment, we also include some observations regarding their real-world implementation.

\begin{itemize}
    \item Required knowledge of the parameters: For effective deployment of Giles's algorithm, denoted as $\calA_G$, it's essential to understand all parameters—$\alpha, \beta, \gamma, c_1, c_2, c_3$. This proves to be quite challenging in practical scenarios. As noted in the review paper \cite{giles2015multilevel}, it is common that $c_1$ and $c_2$ are not known a priori but must be estimated empirically, based on weak error and multilevel correction variance. The current analysis does not take into account the cost of estimating these parameters, and how the errors may effect the corresponding computational cost. On the other hand, the implementation of the $\calA_{\rMLMC}$ framework demands only that the relationship $\beta > \gamma$ be satisfied, a condition achievable in numerous contexts as supported by \cite{rhee2015unbiased, blanchet2015unbiased, wang2022unbiased}. As for the two other low-bias algorithms—$\calA_{\epsilon/2}$ and $\tilde{\calA}_{\rMLMC}$—which serve as variations of $\calA_G$ and $\calA_{\rMLMC}$ respectively, knowledge of $\alpha$ and $c_1$ is required but not of $c_2$ and $c_3$. Hence, even though $\tilde{\calA}_{\rMLMC}$ appears to offer the most robust theoretical guarantees on paper, $\calA_{\rMLMC}$ may still be more straightforward to implement in a variety of cases. 
    \item Dependence on $\epsilon$:  Among the four algorithms discussed, $\calA_{\rMLMC}$ stands out as the only one that does not necessitate a predetermined $\epsilon$ value for its implementation. Therefore, another appealing feature of $\calA_{\rMLMC}$ is its 'adaptive' stopping capability, meaning that users are not required to specify the number of repetitions to run in advance. Once an accuracy level $\epsilon$ is set, one can implement $\calA_{\rMLMC}$ in parallel while concurrently computing the Monte Carlo confidence interval. The simulation process can be halted when the length of this confidence interval falls below the predefined $\epsilon$. The legitimacy of this adaptive stopping criterion is provided in the work \cite{glynn1992asymptotic}.

    \item Generalizability: The simplicity and unbiased nature of $\calA_{\rMLMC}$ make it inherently more adaptable to complex problems when used as a subroutine. For instance, in the context of finite-horizon, discrete-time optimal stopping issues, the paper \cite{zhou2021unbiased} introduces an estimator that achieves a computational complexity of $O(1/\epsilon^2)$ by recursively employing the $\calA_{\rMLMC}$ estimator. This approach is further extended in \cite{syed2023optimal}.

    \item Numerical Stability: In our implementation experience, the randomized methods $\calA_{\rMLMC}$ and $\tilde{\calA}_{\rMLMC}$, which typically employ the ratio-based estimator $\Delta_N/p_N$, appear to be numerically less stable compared to the summation-based estimators found in $\calA_{G}$ and $\calA_{\epsilon/2}$. This instability becomes more noticeable when the value of $N$ is exceptionally large, although such instances are uncommon. Specifically, $\calA_{\rMLMC}$ exhibits even less stability than $\tilde{\calA}_{\rMLMC}$ because the latter constrains the value of $N$ to ensure a lower bound for the denominator. Recent investigations, such as \cite{giles2023efficient, haji2023nested}, explore the potential of combining $\calA_{G}$ and $\calA_{\rMLMC}$ for estimating nested expectations. This hybrid approach could potentially harness the strengths of both algorithms.

    \item Topics not covered: Our current exploration of the unbiased algorithm $\calA_{\rMLMC}$ is centered on the condition where $\beta > \gamma$. In scenarios where $\beta \leq \gamma$, it becomes challenging to simultaneously achieve unbiasedness, finite expectation, and finite computational cost. Nonetheless, similar truncation methods to those employed in the creation of $\tilde{\calA}_{\rMLMC}$ can also be applied to formulate a low-bias randomized MLMC algorithm. This adapted algorithm can achieve the same overall computational cost and a reduced completion time compared to $\calA_G$, provided enough processors are accessible. Note that additional required information, in comparison to $\calA_{\rMLMC}$, includes knowledge of $\alpha$ and \(c_1\). Lastly, the parallelization approach we have examined in this paper is fundamentally of the `embarrassingly parallel' variety, where users run the algorithm independently without intercommunication. It is, however, feasible—though more nuanced—to develop a parallelization strategy for $\calA_G$ that allocates resources at the subroutine level, specifically for $\calA^1, \ldots, \calA^L$, as outlined in Theorem \ref{thm:MLMC}. Though this remains an intriguing avenue for future research, we do not explore this more complex form of parallelization in the current study for the sake of concentration.
\end{itemize}

\section{Applications on Markov Chain Monte Carlo}\label{sec:MCMC}

\subsection{Standard MCMC: cost and completion time analysis}
Monte Carlo Markov Chain (MCMC) methods are a class of algorithms used for sampling from complex probability distributions in order to estimate expectations of certain variables of interest. These methods combine the principles of Monte Carlo simulation, which involves random sampling to approximate numerical results, with Markov Chains, which represent a sequence of events in which the probability of each event depends solely on the state attained in the previous event. MCMC algorithms, such as Metropolis-Hastings or Gibbs sampling, iteratively generate a chain of samples in such a way that the chain eventually converges to the target distribution. Once this convergence is achieved, the samples can be used to approximate various statistical properties, including expected values, of the distribution. This approach is particularly useful in Bayesian statistics and machine learning, where the distribution in question may be complex, high-dimensional, or not easily described by analytical formulas. For a comprehensive overview, we direct readers to the works of \cite{diaconis2009markov,brooks2011handbook}.

In practical applications, one frequently encounters the challenge of estimating the expectation represented as:
\begin{align}\label{eqn:estimation}
	\bE_\pi[f] = \int f(x)\pi(dx)
\end{align}
where it's difficult to sample directly from the underlying distribution \( \pi \). Often, a Markov chain \( \Phi = (\Phi_1, \Phi_2, \ldots, \Phi_n) \) can be constructed to target \( \pi \) as its stationary distribution. Under mild  conditions, it is well-established \cite{geyer1992practical, jones2001honest} that the standard MCMC estimator
\[
	H_{\text{MCMC}}(n) = \frac{\sum_{i=1}^n f(\Phi_i)}{n}
\]
converges to \( \mathbb{E}_\pi(f) \) as \( n \) approaches infinity. The Markov chain Central Limit Theorem applies under slightly stronger, yet still manageable, conditions, see  \cite{roberts2004general,jones2004markov} for details. In practice, users often discard an initial set of samples, a process known as the `burn-in' technique \cite{geyer2011introduction}.  Since employing a constant-length `burn-in' phase will not alter the scale or findings of any upcoming theoretical examinations, we will continue assuming that no `burn-in' is utilized for the rest of this section \footnote{While `burn-in' is commonly used, some experts debate its necessity. For more details, see \url{http://users.stat.umn.edu/~geyer/mcmc/burn.html}.}.

To continue discussing the performance of standard MCMC estimators, it is useful to recall some definitions in Markov chain theory. We say a $\pi$-invariant, $\phi$-irreducible and aperiodic Markov transition kernel $P$ is \textit{geometrically ergodic} if there exists a constant $\rho < 1$, and a function $M(x)$ such that for $\pi$-a.s. $x$ and every $n$,
\begin{equation}
  \left\| P^n(x, \cdot) - \pi(\cdot) \right\|_\TV \leq M(x) \rho^n.
\end{equation}
Here $\lVert \cdot \rVert_\TV$ stands for the total variation (TV) distance between two probability measures. The technical definitions for irreducibility and aperiodicity can be found in Chapter 5 of \cite{meyn2012markov}.

The next Theorem collects sufficient conditions for the Markov chain Strong Law of Large Numbers (SLLN) and Central Limit Theorem (CLT).

\begin{theorem}\label{thm:MCLLNCLT}
    Suppose $\Phi$ is a $\pi$-invariant, $\phi$-irreducible and aperiodic  Markov chain with transition kernel $P$ on a state space with countably generated $\sigma$-algebra, then 
    \begin{itemize}
        \item (Markov chain SLLN, Theorem 17.0.1 of \cite{meyn2012markov} or Fact 5 in \cite{roberts2004general}) For any function $f$ with $\bE_\pi[\lvert f \rvert] < \infty$, 
        \begin{align*}
            \lim_{n\rightarrow\infty} \frac 1n \sum_{i=1}^n f(\Phi_i) =  \bE_\pi[f] \qquad \text{almost surely.}
        \end{align*}
        \item (Markov chain CLT, Theorem 24 of \cite{roberts2004general})In addition, if $\Phi$ is geometrically ergodic and $\bE[\lvert f \rvert^{2+\delta}] < \infty$ for some $\delta > 0$, then
        \begin{align*}
            \lim_{n\rightarrow\infty} \sqrt{n}\left(\frac 1n \sum_{i=1}^n f(\Phi_i)  - \bE_\pi[f]\right) \overset{d}{\to} \mathbb{N}(0, \sigma_f^2)
        \end{align*}
        for some $\sigma_f > 0$. 
    \end{itemize}
\end{theorem}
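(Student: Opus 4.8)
This statement compiles two classical facts, so the plan is to reduce each part to the cited result and indicate the underlying mechanism rather than reproduce the full arguments. \textbf{SLLN.} I would first note that a $\pi$-invariant, $\phi$-irreducible, aperiodic kernel that admits an invariant \emph{probability} measure is positive recurrent and, outside a $\pi$-null set, Harris recurrent (Chapters 9--10 and 17 of \cite{meyn2012markov}). On the Harris-recurrent absorbing set the ergodic theorem (Theorem 17.0.1 of \cite{meyn2012markov}) gives $n^{-1}\sum_{i=1}^n f(\Phi_i) \to \bE_\pi[f]$ almost surely for every $f \in L^1(\pi)$; since the exceptional set is $\pi$-null, the conclusion holds for $\pi$-a.e.\ initial state, and by Harris recurrence for every initial state. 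An alternative, which is the route behind Fact 5 of \cite{roberts2004general}, runs the Nummelin splitting construction, obtains i.i.d.\ regeneration blocks, and applies the ordinary SLLN to the block sums together with a renewal bound on the number of blocks completed by time $n$.

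\textbf{CLT.} For the second claim the extra input is geometric ergodicity, which forces the regeneration times of the split chain to have exponentially light tails, hence finite moments of all orders. I would decompose $S_n := \sum_{i=1}^n f(\Phi_i)$ into a sum over completed excursions plus a boundary term that is $o(\sqrt n)$ in probability; the excursion sums $\{Y_k\}_{k \ge 2}$ are i.i.d., and the key estimate is $\bE[|Y_1|^{2+\delta'}] < \infty$ for some $\delta' \in (0,\delta]$, obtained by a H\"older split of $Y_1 = \sum_i f(\Phi_i)$ over the excursion, pairing $\bE_\pi[|f|^{2+\delta}] < \infty$ with the exponential tail of the excursion length. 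Applying the classical i.i.d.\ CLT to the partial sums of the $Y_k$ and transferring it to the fixed-horizon statistic via a random-index (Anscombe) argument yields $\sqrt n\,(S_n/n - \bE_\pi[f]) \todistr \mathcal N(0,\sigma_f^2)$, with $\sigma_f^2 = \bE[Y_1^2]/\bE[\tau_1]$ after centering; one checks $\sigma_f^2 > 0$ unless $f$ is $\pi$-a.s.\ constant. This is precisely Theorem 24 of \cite{roberts2004general}; see also \cite{jones2004markov}.

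\textbf{Main obstacle.} The one genuinely delicate point is the moment bookkeeping in the CLT: verifying that a $(2+\delta)$-th moment of $f$ under $\pi$ propagates to a finite $(2+\delta')$-th moment of the block sums requires pairing the geometric tail of the block length with an appropriately tuned H\"older exponent, and may cost a slightly smaller $\delta'$ in the intermediate steps. Everything else is routine renewal/Anscombe bookkeeping or a direct appeal to the cited theorems, so in the write-up I would state the result with a brief sketch of the splitting/regeneration mechanism and refer to \cite{meyn2012markov, roberts2004general, jones2004markov} for the complete proofs.
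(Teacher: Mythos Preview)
Your proposal is correct and in fact goes well beyond what the paper does: the paper offers no proof of this theorem at all, treating it purely as a citation of known results (Theorem~17.0.1 of \cite{meyn2012markov} and Theorem~24 of \cite{roberts2004general}) and moving on immediately to apply it. Your sketch of the regeneration/splitting mechanism and the Anscombe-type transfer is a faithful outline of how those cited theorems are actually proved, so there is nothing to correct; if anything, for this paper's purposes you could shorten the write-up to a one-line reference.
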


If we represent the procedure of `executing the MCMC algorithm for \( n \) steps and then producing \( H_{\text{MCMC}}(n) \)' as \( \mathcal{A}_{\text{MCMC}}(n) \), and consider both a single iteration of the MCMC algorithm and a single arithmetic operation as having unit cost, then the preceding theorem immediately implies the following conclusion regarding the total computational expense for standard MCMC estimators to achieve an \( \epsilon^2 \)-MSE.

\begin{prop}\label{prop:totalcost_mcmc}
Under the CLT assumptions in Theorem \ref{thm:MCLLNCLT}, if a sufficiently large constant \( C > 0 \) exists such that when \( n =  \ceil{C\epsilon^{-2}} \), the MSE of \( \mathcal{A}_{\text{MCMC}}(n) \) is no greater than \( \epsilon^2 \). The total computational cost of executing this algorithm is \( \mathcal{O}(\epsilon^{-2}) \).
\end{prop}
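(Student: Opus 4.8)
The plan is to unpack what ``MSE at most $\epsilon^2$'' means for the estimator $\mathcal{A}_{\text{MCMC}}(n) = H_{\text{MCMC}}(n)$ and then count operations. Since the chain is $\pi$-invariant, $\mathbb{E}_\pi[f(\Phi_i)] = \mathbb{E}_\pi[f]$ once we (harmlessly, by the remark on burn-in) regard the chain as started in stationarity; hence the bias of $H_{\text{MCMC}}(n)$ is zero and the MSE equals the variance $\mathsf{Var}(H_{\text{MCMC}}(n))$. The CLT in Theorem~\ref{thm:MCLLNCLT} gives $\sqrt{n}\,(H_{\text{MCMC}}(n) - \mathbb{E}_\pi[f]) \todistr \mathbb{N}(0,\sigma_f^2)$, and the standard strengthening for geometrically ergodic chains with a $(2+\delta)$-moment (as used in \cite{roberts2004general,jones2004markov}) gives $n\,\mathsf{Var}(H_{\text{MCMC}}(n)) \to \sigma_f^2$. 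Therefore there is a finite constant $C_0 > \sigma_f^2$ and an index $n_0$ so that $\mathsf{Var}(H_{\text{MCMC}}(n)) \le C_0/n$ for all $n \ge n_0$; enlarging $C_0$ if necessary to also dominate $n\,\mathsf{Var}(H_{\text{MCMC}}(n))$ for the finitely many $n < n_0$, we get $\mathsf{Var}(H_{\text{MCMC}}(n)) \le C_0/n$ for every $n \ge 1$.

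Given this, I would take $C := C_0 \epsilon^{-2} \cdot$\,(a harmless constant) and set $n = \ceil{C\epsilon^{-2}}$; then $\mathsf{Var}(H_{\text{MCMC}}(n)) \le C_0/n \le C_0/(C\epsilon^{-2}) = \epsilon^2 \cdot (C_0/C) \le \epsilon^2$ once $C \ge C_0$. This establishes the first sentence of the Proposition. For the total computational cost: running the chain for $n$ steps costs $n$ units (one per Markov transition), and forming the average $H_{\text{MCMC}}(n)$ costs $O(n)$ arithmetic operations, so the cost of one run of $\mathcal{A}_{\text{MCMC}}(n)$ is $O(n) = O(\epsilon^{-2})$. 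Because the estimator is already unbiased with variance $\le \epsilon^2$ after this single run, no replication across processors is needed (mirroring the $\calA_G$ discussion in Section~\ref{sec:MLMC}), so the total cost is $O(\epsilon^{-2})$ as claimed. One could equivalently invoke Theorem~\ref{thm:cost-unbiased} with $v_\calA \asymp C_0$ and a deterministic per-step model, but the direct count is cleaner here.

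The only real subtlety — and the step I expect to need the most care — is the passage from the \emph{distributional} CLT to a bound on the \emph{variance} of $H_{\text{MCMC}}(n)$, i.e. justifying $n\,\mathsf{Var}(H_{\text{MCMC}}(n)) \to \sigma_f^2$ (or at least $\limsup_n n\,\mathsf{Var}(H_{\text{MCMC}}(n)) < \infty$). Convergence in distribution does not by itself control second moments; one needs uniform integrability of $n\,(H_{\text{MCMC}}(n) - \mathbb{E}_\pi[f])^2$, which is exactly where geometric ergodicity together with the $\mathbb{E}_\pi[|f|^{2+\delta}]<\infty$ hypothesis enters (these yield summable autocovariances and the requisite moment control; see \cite{jones2004markov}). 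Since the Proposition is stated ``under the CLT assumptions of Theorem~\ref{thm:MCLLNCLT},'' I would simply cite this standard fact rather than reprove it, and the rest of the argument is the elementary bookkeeping above.
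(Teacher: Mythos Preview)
Your argument matches the paper's, which gives no separate proof and simply notes the proposition follows ``immediately'' from the Markov-chain CLT; you have correctly supplied the one nontrivial detail (getting $n\,\mathsf{Var}(H_{\text{MCMC}}(n)) = O(1)$ from geometric ergodicity plus the $(2+\delta)$-moment, not merely from distributional convergence). One small correction: the paper's burn-in remark says \emph{no} burn-in is used, not that the chain is started in stationarity, so the bias is $O(1/n)$ rather than zero (the paper states this explicitly a few lines later when discussing $\calA_{\epsilon/2}$); since the squared bias is then $O(1/n^2)=O(\epsilon^4)$ this is harmlessly absorbed into your choice of $C$ and the conclusion is unaffected.
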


Consequently, assuming the chain is geometrically mixing and the function of interest has moments exceeding the second under the stationary distribution, standard MCMC algorithms also achieve the optimal \( \mathcal{O}(\epsilon^{-2}) \) computational cost. However, parallelizing these inherently sequential MCMC algorithms presents its own challenges. A straightforward approach to parallelization involves independently running \( \mathcal{A}_{\text{MCMC}}(n) \) on each processor, choosing \( n \) large enough so that the resulting bias is less than \( \epsilon/2 \). As established in the literature (for example, see page 24 of \cite{geyer2011introduction}), the bias \( b_{\mathcal{A}_{\text{MCMC}}}(n) \) is \( \mathcal{O}(1/n) \). The subsequent proposition outlines the time to completion for such a straightforward parallelization approach. The proof directly follows from  stated results in Markov chain theory and is thus skipped here.

\begin{prop}\label{prop:completion_mcmc}
Under the CLT assumptions in Theorem \ref{thm:MCLLNCLT}, suppose that large enough constants \( C_1, C_2 > 0 \) exist such that the algorithm \\\( \mathcal{A}_{\epsilon/2} := \mathcal{A}_{\text{MCMC}}(C_1\epsilon^{-1}) \) meets the conditions \( b_{\mathcal{A}_{\epsilon/2}} \leq \epsilon/2 \) and \( v_{\mathcal{A}_{\epsilon/2}} \leq C_2\epsilon \). Hence, by running \( \mathcal{A}_{\epsilon/2} \) independently on \( 0.75^{-1} C_2\epsilon^{-1} \) processors and averaging the outputs, an estimator is produced with an \( \epsilon^2 \)-MSE within a completion time of \( C_1\epsilon^{-1} \) and total cost $C_1 C_2 \epsilon^{-2}$.
\end{prop}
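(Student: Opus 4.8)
The plan is to reduce the statement to the bias--variance bookkeeping already carried out in Theorem \ref{thm:cost-biased}, after first checking that the hypothesised constants $C_1,C_2$ are not vacuous, and then to exploit the fact that the cost of $\mathcal{A}_{\epsilon/2}$ is deterministic. First I would pin down the bias. Writing the chain from whatever fixed initial distribution it is run from, the bias of $\mathcal{A}_{\text{MCMC}}(n)$ is $b_{\mathcal{A}_{\text{MCMC}}}(n)=\frac1n\sum_{i=1}^n(\bE[f(\Phi_i)]-\bE_\pi[f])$. Geometric ergodicity gives $\|P^i(x,\cdot)-\pi\|_{\TV}\le M(x)\rho^i$, and the $(2+\delta)$-moment assumption on $f$ (via a truncation argument, or equivalently via $f$-norm / $V$-uniform ergodicity) upgrades this to $|\bE[f(\Phi_i)]-\bE_\pi[f]|\le K\bar\rho^{\,i}$ for some $K<\infty$, $\bar\rho<1$. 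Hence $\sum_{i\ge 1}|\bE[f(\Phi_i)]-\bE_\pi[f]|<\infty$, so $b_{\mathcal{A}_{\text{MCMC}}}(n)=O(1/n)$ --- the bound attributed to \citep{geyer2011introduction} --- and choosing $C_1$ large enough makes this term $\le\epsilon/2$ at $n=\ceil{C_1\epsilon^{-1}}$, which produces the first constant.

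Next I would control the variance. The Markov chain CLT in Theorem \ref{thm:MCLLNCLT} yields $\sqrt n(\bar f_n-\bE_\pi[f])\todistr\cN(0,\sigma_f^2)$; combined with uniform integrability of $\{n(\bar f_n-\bE_\pi[f])^2\}$, which holds under geometric ergodicity and the $(2+\delta)$-moment condition and gives $n\,\var(\bar f_n)\to\sigma_f^2$, one obtains $v_{\mathcal{A}_{\text{MCMC}}}(n)=\sigma_f^2/n+o(1/n)=O(1/n)$. Evaluating at $n=\ceil{C_1\epsilon^{-1}}$ and enlarging the constant gives $v_{\mathcal{A}_{\epsilon/2}}\le C_2\epsilon$, the second constant. (One could also simply take these two $O(1/n)$ rates as the content of the hypothesis, since the proposition itself asserts the existence of $C_1,C_2$.)

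With both constants in hand the remainder is exactly the bookkeeping of Theorem \ref{thm:cost-biased}: averaging $m$ i.i.d.\ copies of $\mathcal{A}_{\epsilon/2}$ has MSE $=b_{\mathcal{A}_{\epsilon/2}}^2+v_{\mathcal{A}_{\epsilon/2}}/m\le\epsilon^2/4+C_2\epsilon/m$, so taking $m=\ceil{0.75^{-1}C_2\epsilon^{-1}}$ forces the second term below $0.75\,\epsilon^2$ and hence MSE $\le\epsilon^2$. Since each of the $n=\ceil{C_1\epsilon^{-1}}$ chain iterations and each arithmetic operation cost one unit, $C(\mathcal{A}_{\epsilon/2})=C_1\epsilon^{-1}$ is deterministic, so by the deterministic-cost case of Theorem \ref{thm:cost-biased} the worst-case and average completion times coincide and both equal $C_1\epsilon^{-1}$, while the total cost is $m\cdot C_1\epsilon^{-1}=O(C_1C_2\epsilon^{-2})$ (the stray $0.75^{-1}$ being absorbable into $C_2$).

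I expect the only genuinely technical part to be the passage in the first two paragraphs from the qualitative hypotheses ``geometrically ergodic'' and ``$\bE_\pi|f|^{2+\delta}<\infty$'' to the quantitative $O(1/n)$ rates for an unbounded $f$: the bias requires an $f$-norm ergodicity/truncation estimate, and the variance requires a uniform-integrability argument to turn the distributional CLT into convergence of second moments. Everything downstream of that --- the choice of $m$, the MSE inequality, the deterministic-cost completion-time claim and the total-cost count --- is immediate from Theorem \ref{thm:cost-biased}, which is presumably why the paper records the proof as ``directly following from stated results in Markov chain theory.''
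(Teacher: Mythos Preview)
Your proposal is correct and matches the paper's intent: the paper itself omits the proof, stating only that it ``directly follows from stated results in Markov chain theory,'' and your third paragraph is precisely that bookkeeping via Theorem~\ref{thm:cost-biased}. Your first two paragraphs go slightly beyond what is strictly required, since the proposition phrases the existence of $C_1,C_2$ as a \emph{hypothesis} rather than a conclusion; but verifying that this hypothesis is not vacuous (via the $O(1/n)$ bias from \cite{geyer2011introduction} and the $O(1/n)$ variance from the CLT plus uniform integrability) is a reasonable and correct addition, and your caveat that the $f$-norm/truncation step is the only genuinely technical point is accurate.
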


Consequently, this straightforward method of parallelization maintains the overall computational cost at \( \mathcal{O}(\epsilon^{-2}) \), while reducing the time to completion to \( \mathcal{O}(\epsilon^{-1}) \) provided that an adequate number of processors are accessible. While the estimator $\calA_{\epsilon/2}$ appears to offer superior theoretical performance compared to the standard single processor estimator, its practical implementation can be challenging. This is primarily because it necessitates prior knowledge of the constants $C_1$ and $C_2$, which are often difficult to obtain in practice. We will discuss the practical aspects of these algorithms later in this section. For example, another standard estimator $\calA_{\epsilon/2}$ that can be considered in the context of aperiodic (geometrically ergodic) chains involves running the estimator for $O(\log (1/\epsilon))$. This estimator while competitive in theory has the problem that it needs apriori bounds on rates of convergence to be implemented. These and other practical considerations are discussed at the end of this section.  

\subsection{Unbiased estimator by coupling}\label{subsec:coupling}
Although the Strong Law of Large Numbers (SLLN) and the Central Limit Theorem (CLT) for Markov Chains affirm the consistency of the standard MCMC estimator, this estimator typically exhibit an $\calO(1/n)$ bias for \( \mathbb{E}_\pi[f] \) when \( n \) is fixed, unless the chain starts from its stationary distribution.  Now we briefly review the idea of the unbiased  MCMC algorithms, which is originated from \cite{glynn2014exact} and then used in \cite{jacob2020unbiased} for general-purpose MCMC algorithms.   The idea is based on the following debiasing formula used in  \cite{glynn2014exact}:
\begin{equation}\label{eqn:debiasing}
	\bE_\pi[f] = \bE[f(\Phi_k)] + \sum_{i = k+1}^\infty \bE[f(\Phi_i) - f(\Phi_{i-1})],
\end{equation}
where $k$ is an arbitrary  fixed integer. A coupling consists of a pair of Markov chains, denoted as $(Y,Z) = (Y_t, Z_t)_{t=1}^\infty$, which exist on a combined state space and are governed by a joint transition kernel represented by $\bar P$. Both chains in the pair follow the same original transition kernel $P$, which also guides the evolution of the original chain $\Phi$. To initialize the coupled chain, $Y_0$ is drawn from  some simple distribution $\pi_0$ and $Y_1$ is drawn from $P(Y_0, \cdot)$, while $Z_0$ is independently sampled from $\pi_0$. The evolution of the pair then proceeds according to $(Y_t, Z_{t-1}) \sim \barP((Y_{t-1}, Z_{t-2}), \cdot)$. This allow us to rewrite formula \eqref{eqn:debiasing} as
\begin{align*}
    \bE_\pi[f]= \lim_{n\rightarrow \infty} \bE[f(Y_n)] &=  \bE[f(Y_k)] + \sum_{n=k+1}^\infty (\bE[f(Y_{n})] - \bE[f(Y_{n-1})])\\
	&= \bE[f(Y_k)] + \sum_{n=k+1}^\infty\bE[f(Z_{n}) - f(Y_{n-1})] 
\end{align*}

The most important element in designing a coupling is to find a `faithful' coupling \citep{rosenthal1997faithful} which can be easily implemented without  extra information of the underlying chain. In a faithful coupling, there exists a random yet finite time $\tau$ at which $Y_{\tau} = Z_{\tau-1}$, and once $Y_t$ and $Z_{t-1}$ meet, they remain identical thereafter. This concept allows us to reformulate the previous equation as:

\begin{align*}
    \bE_\pi[f] = \bE[f(Y_k)] + \sum_{n=k+1}^{\tau-1}\bE[f(Z_{n}) - f(Y_{n-1})] 
\end{align*}

For any value of $k$, the estimator $H_k(Y,Z) := f(Y_k) + \sum_{i=k+1}^{\tau-1}(f(Y_i) - f(Z_{i-1}))$ serves as an unbiased estimator for $\bE_\pi[f]$. The proof confirming its unbiased nature is presented in \cite{jacob2020unbiased}. Both theoretical and empirical examinations of these approaches are covered in \cite{o2021metropolis, wang2021maximal, papp2022new}. We will again assume $k = 0$ as any fixed $k$ will not alter the magnitude analysis discussed below. 

From now on,  denote the unbiased MCMC algorithm which outputs $H_0(Y,Z)$ by $\calA_{\uMCMC}$. 
The theoretical properties of $\calA_{\uMCMC}$ as well as their completion time requires 
a bit more information of the underlying chain.  We say a $\pi$-invariant, $\phi$-irreducible and aperiodic Markov transition kernel $P$ satisfies a geometric drift condition if there exists a measurable function $V: \Omega \rightarrow [1,\infty)$, $\lambda\in (0,1)$, and a measurable set $\calS$ such that for all $x\in\Omega$:
\begin{align}\label{eqn:drift}
	\int P(x,\diff y) V(y) \leq \lambda V(x) + b\Indc(x\in \calS).
\end{align}
Moreover, the set $\calS$ is called a small set if there exists a positive integer $m_0$, $\epsilon > 0$, and a probability measure $\nu$ on such that for every $x\in \calS$:
\begin{align}\label{eqn: small}
	P^{m_0}(x,\cdot) \geq  \epsilon \mu(\cdot).
\end{align}
Typically, these results are applied with $m_0=1$. The next result  describes the tail behavior of the distribution of $\tau$:

\begin{prop}[Proposition 4 in \cite{jacob2020unbiased}]\label{prop: moment, formal}
	Suppose the Markov transition kernel  satisfies a geometric drift condition with a small set $\calS$ of the form $\calS =\{x: V(x)\leq L\}$ for $\lambda + b/(1+L) < 1$. Suppose there exists $\tilde \epsilon \in (0,1)$ such that
	\[
	\inf_{(x,y)\in \calS \times \calS} \barP((x,y), \calD) \geq \tilde\epsilon,
	\]
	where $\calD:= \{(x,x): x\in \Omega\}$ is the diagonal of $\Omega\times \Omega$. 	Then 
	there exist constants $C'$ and $\kappa \in (0,1)$ such that
 \[
\bP[\tau > n] \leq C' \pi_0(V)\kappa^n. 
 \]
\end{prop}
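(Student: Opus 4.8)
The plan is to lift everything to the coupled chain and then combine a drift condition for that chain with the one-step minorization on $\calS\times\calS$. Write $\bar\Phi_t := (Y_t, Z_{t-1})$ for $t\ge1$: by construction this is a Markov chain with kernel $\barP$, started from $\bar\Phi_1 = (Y_1, Z_0)$ with $Y_0, Z_0 \iiddistr \pi_0$ and $Y_1\sim P(Y_0,\cdot)$, and $\tau=\inf\{t\ge1:\bar\Phi_t\in\calD\}$. Set $W(x,y):=\tfrac12\big(V(x)+V(y)\big)\ge1$. Since each coordinate of $\bar\Phi$ marginally evolves under $P$, the geometric drift for $P$ gives $\barP W(x,y)\le \lambda W(x,y)+\tfrac{b}{2}\big(\indc{x\in\calS}+\indc{y\in\calS}\big)$ for all $(x,y)$. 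The first real step is to upgrade this to a drift of $\bar\Phi$ towards the product small set $\calS\times\calS$. The only delicate case is when exactly one coordinate lies in $\calS$, say $y\in\calS$ but $x\notin\calS$, so the bound still carries a $+b/2$ while $(x,y)\notin\calS\times\calS$; there $V(x)>L$ and $V(y)\ge1$, hence $W(x,y)>(L+1)/2$, so $\tfrac{b}{2}\le\tfrac{b}{L+1}W(x,y)$ and $\barP W(x,y)\le\big(\lambda+\tfrac{b}{L+1}\big)W(x,y)$. This is exactly where the hypothesis $\lambda+b/(1+L)<1$ is consumed: with $\bar\lambda:=\lambda+b/(1+L)\in(0,1)$, $\bar b:=b$, and the trivial bound $\barP W\le\lambda W+b\le\bar\lambda W+\bar b$ on $\calS\times\calS$, one obtains
\[
\barP W(x,y)\le\bar\lambda W(x,y)+\bar b\,\indc{(x,y)\in\calS\times\calS}\qquad\text{for all }(x,y).
\]

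Next I would turn this into a geometric tail bound via a supermartingale. For constants $c>0$ and $\kappa\in(0,1)$ to be chosen, put $M_n:=\kappa^{-n}\big(W(\bar\Phi_n)+c\big)\indc{\tau>n}$. On $\{\tau>n\}$ there are two cases. If $\bar\Phi_n\notin\calS\times\calS$, the joint drift gives $\bE[W(\bar\Phi_{n+1})+c\mid\calF_n]\le\bar\lambda W(\bar\Phi_n)+c$, which is at most $\kappa\big(W(\bar\Phi_n)+c\big)$ once $\kappa$ is close enough to $1$ (the binding case $W(\bar\Phi_n)=1$ requires $\kappa\ge(\bar\lambda+c)/(1+c)$). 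If $\bar\Phi_n\in\calS\times\calS$, then with probability at least $\tilde\epsilon$ one has $\tau=n+1$ so the new indicator vanishes, while on the complement $\bE[W(\bar\Phi_{n+1})\mid\calF_n]\le\bar\lambda L+\bar b=:K$, giving $\bE[(W(\bar\Phi_{n+1})+c)\indc{\tau>n+1}\mid\calF_n]\le K+(1-\tilde\epsilon)c\le\kappa\big(W(\bar\Phi_n)+c\big)$ once $\kappa>1-\tilde\epsilon$ and $c$ is large relative to $K$. Choosing $\kappa<1$ with $\kappa>\max\{\bar\lambda,1-\tilde\epsilon\}$ and close enough to $1$, and then $c$ in the resulting (nonempty) range, makes $\{M_n\}$ a nonnegative supermartingale. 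Then $\kappa^{-n}\bP[\tau>n]\le\bE[M_n]\le\bE[M_1]=\kappa^{-1}\bE[(W(\bar\Phi_1)+c)\indc{\tau>1}]$, and since $\bE[W(\bar\Phi_1)]=\tfrac12\big(\bE[V(Y_1)]+\pi_0(V)\big)\le\tfrac12\big((\lambda+b)\pi_0(V)+\pi_0(V)\big)$ using $V\ge1$ (hence $\pi_0(V)\ge1$), absorbing the additive $c$ and the $\kappa^{-1}$ into a constant $C'$ yields $\bP[\tau>n]\le C'\pi_0(V)\kappa^n$.

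The step I expect to be the main obstacle is the interplay in the supermartingale argument between the two regimes: a supermartingale built from $W$ alone breaks down on $\calS\times\calS$, where $W$ can drop to $1$ while the one-step conditional mean is the constant $K$ rather than a contraction of $W$, which forces the shift $W\mapsto W+c$ together with a simultaneous, mildly delicate choice of $\kappa$ and $c$; relatedly, securing $\bar\lambda<1$ in the ``mixed'' region is the only place that genuinely uses the quantitative hypothesis $\lambda+b/(1+L)<1$, so the argument must be organized so that this region is handled by the drift (not the minorization). An alternative to the supermartingale step is a renewal estimate: dominate $\tau$ stochastically by a sum of a geometric$(\tilde\epsilon)$ number of excursion lengths away from $\calS\times\calS$, each with a geometric tail by the joint drift, so that $\tau$ inherits a geometric tail with the $\pi_0(V)$ factor entering through the first excursion. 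I would prefer the supermartingale route, as it keeps all constants explicit and avoids independence bookkeeping across excursions.
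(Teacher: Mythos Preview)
The paper does not supply its own proof of this proposition: it is quoted verbatim as ``Proposition 4 in \cite{jacob2020unbiased}'' and used as a black box. So there is no in-paper argument to compare against.

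Your argument is correct and is essentially the standard one behind the cited result: lift the drift to $W(x,y)=\tfrac12(V(x)+V(y))$, use the hypothesis $\lambda+b/(1+L)<1$ to absorb the stray $+b/2$ in the mixed region into the contraction (this is indeed the only place that hypothesis is needed), and then run a shifted supermartingale $M_n=\kappa^{-n}(W(\bar\Phi_n)+c)\indc{\tau>n}$ with $\kappa$ chosen just below $1$ and $c$ large. One expositional wrinkle: in the small-set case you write ``on the complement $\bE[W(\bar\Phi_{n+1})\mid\calF_n]\le K$,'' which reads as a conditional expectation given non-coupling. That conditional expectation need not be bounded by $K$; what you actually use (and what gives the stated $K+(1-\tilde\epsilon)c$) is the cruder split $\bE[(W+c)\indc{\tau>n+1}\mid\calF_n]\le \bE[W(\bar\Phi_{n+1})\mid\calF_n]+c\,\bP[\tau>n+1\mid\calF_n]\le K+(1-\tilde\epsilon)c$. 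With that clarification the supermartingale verification and the final bound $\bP[\tau>n]\le C'\pi_0(V)\kappa^n$ go through exactly as you wrote.
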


The next result summarizes the theoretical properties of $\calA_{\uMCMC}$:
\begin{prop}[Proposition 1 in \cite{jacob2020unbiased}]\label{prop:unbiasedMCMC}
   Assuming the conditions of Proposition \ref{prop: moment, formal} hold, and further assuming that $\bE[f(\Phi_n)]$ converges to $\bE_\pi[f]$ as $n \rightarrow \infty$ and $\bE_\pi[|f|^{2+\delta}] < \infty$, we can conclude the following about $\calA_{\uMCMC}$: it is an unbiased estimator, its variance $v_{\calA_{\uMCMC}}$ is of the order $\calO(1)$, and its cost  $C(\calA_{\uMCMC})$ is sub-exponential.
\end{prop}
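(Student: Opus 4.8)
The plan is to prove the three assertions — unbiasedness, $\calO(1)$ variance, and sub-exponential cost — in that order, with the first two resting on the geometric tail bound $\bP[\tau > n]\le C'\pi_0(V)\kappa^n$ from Proposition \ref{prop: moment, formal} together with a uniform-in-$n$ control of the moments of $f$ along the chain, while the third is essentially immediate from the same tail bound. For unbiasedness I would first rewrite the estimator — a finite sum, since $\tau<\infty$ almost surely — as
\[
H_0(Y,Z)= f(Y_0)+\sum_{i=1}^{\infty}\big(f(Y_i)-f(Z_{i-1})\big)\Indc(i<\tau),
\]
using that $Y_i=Z_{i-1}$ for every $i\ge \tau$ so the added terms vanish. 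To move $\bE$ inside the sum I would verify absolute summability: by Hölder's inequality each term is at most $\bE[|f(Y_i)-f(Z_{i-1})|^{2+\delta}]^{1/(2+\delta)}\,\bP[\tau>i]^{(1+\delta)/(2+\delta)}$, which — given a uniform bound on $\bE[|f(\Phi_n)|^{2+\delta}]$ and the geometric decay of $\bP[\tau>i]$ — is the general term of a convergent geometric series. Interchanging, and using that $Y_i$ and $Z_{i-1}$ are marginally distributed as the chain from $\pi_0$ at times $i$ and $i-1$, yields $\bE[H_0(Y,Z)]=\bE[f(\Phi_0)]+\sum_{i\ge 1}\big(\bE[f(\Phi_i)]-\bE[f(\Phi_{i-1})]\big)$, a telescoping series whose partial sums equal $\bE[f(\Phi_n)]$, which converges to $\bE_\pi[f]$ by hypothesis.

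For the variance it suffices to bound $\|H_0(Y,Z)\|_2$ by a finite constant independent of $\epsilon$. Applying Minkowski's inequality to the series representation and then Hölder with conjugate exponents $(2+\delta)/2$ and $(2+\delta)/\delta$ gives
\[
\|H_0(Y,Z)\|_2\ \le\ \|f(Y_0)\|_2+\sum_{i\ge 1}\bE\big[|f(Y_i)-f(Z_{i-1})|^{2+\delta}\big]^{1/(2+\delta)}\,\bP[\tau>i]^{\delta/(2(2+\delta))},
\]
and the right-hand side is again a convergent geometric-type series under the same uniform moment bound, so $v_{\calA_{\uMCMC}}<\infty$; since nothing in the bound involves $\epsilon$, this is $\calO(1)$.

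For the cost, the number of random seeds used by $\calA_{\uMCMC}$ is, up to the fixed overhead from the choice of $k$, an affine function of the meeting time $\tau$, so $C(\calA_{\uMCMC})$ inherits the tail of $\tau$. Proposition \ref{prop: moment, formal} supplies $\bP[\tau>n]\le C'\pi_0(V)\kappa^n$, an exponential upper tail; the lower tail is trivial since $\tau\ge 1$ is bounded below. Absorbing the leading constant $C'\pi_0(V)$ into a slightly enlarged rate parameter then shows $C(\calA_{\uMCMC})$ satisfies Definition \ref{def:sub-exponential}.

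The one nontrivial ingredient, invoked in both of the first two steps, is the uniform moment bound $\sup_n \bE[|f(\Phi_n)|^{2+\delta}]<\infty$ for the chain started from $\pi_0$; this is exactly what the geometric drift condition \eqref{eqn:drift} buys, since iterating it gives $\bE[V(\Phi_n)]\le \lambda^n V(\Phi_0)+b/(1-\lambda)$ and hence $\sup_n \bE[V(\Phi_n)]<\infty$ whenever $\pi_0(V)<\infty$, after which one needs $|f|^{2+\delta}$ to be dominated up to a constant by $V$ — a hypothesis built into the assumptions, as $\bE_\pi[|f|^{2+\delta}]<\infty$ alone does not control the transient moments. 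Everything else is routine manipulation with Hölder's inequality, Minkowski's inequality, and geometric series, so I expect the marriage of the geometric meeting-time tail with the uniform drift-based moment bound to be the crux.
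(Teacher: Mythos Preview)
The paper offers no proof here: the proposition is simply quoted from \cite{jacob2020unbiased}, and the text moves directly to applying it via Theorems \ref{thm:cost-unbiased} and \ref{thm: completion-moment}. Your outline is correct and mirrors the argument in the cited source --- write the estimator as a series using faithfulness of the coupling, then control each term by H\"older's inequality, pairing a uniform $(2+\delta)$-th moment of $f(\Phi_n)$ against the geometric tail $\bP[\tau>i]\le C'\pi_0(V)\kappa^i$, and sum; Minkowski plus the same pairing handles the second moment; and the cost claim is immediate since $C(\calA_{\uMCMC})$ is affine in $\tau$. You are also right to isolate the hinge as the uniform bound $\sup_n \bE_{\pi_0}[|f(\Phi_n)|^{2+\delta}]<\infty$: in \cite{jacob2020unbiased} this is taken as a standing assumption, whereas here it has to be extracted from the drift condition \eqref{eqn:drift} through a domination $|f|^{2+\delta}\le c\,V$ for some constant $c$ --- a hypothesis the present paper leaves implicit and which, as you note, the stationary moment condition $\bE_\pi[|f|^{2+\delta}]<\infty$ does not by itself deliver.
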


Therefore we are able to apply Theorem \ref{thm:cost-unbiased} and \ref{thm: completion-moment} to conclude the following: 
\begin{prop}\label{prop:cost_completion_uMCMC}
    Given the same notations as previously discussed and under the assumptions in Proposition \ref{prop:unbiasedMCMC}, running \( \mathcal{A}_{\uMCMC} \) for \( \lceil v_{\mathcal{A}_{\rMLMC}}\epsilon^{-2} \rceil \) repetitions and taking the average, we obtain an estimator within expected total cost $\calO(\epsilon^{-2} )$ and completion time 
    \[\bE[\max_{1\leq i \leq \lceil v_{\mathcal{A}_{\rMLMC}}\epsilon^{-2} \rceil} \calA_{\uMCMC, i}] = \calO(\log(1/\epsilon))    
    \]
    that satisfies the \( \epsilon^2 \)-MSE criterion.
\end{prop}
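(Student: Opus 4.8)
The plan is to obtain the statement as an immediate corollary of the two general results already proved. Proposition \ref{prop:unbiasedMCMC} supplies precisely the three ingredients that Theorems \ref{thm:cost-unbiased} and \ref{thm: completion-moment} require: $\calA_{\uMCMC}$ has zero bias, its variance satisfies $v_{\calA_{\uMCMC}} = \calO(1)$, and its per-run cost $C(\calA_{\uMCMC})$ is sub-exponential. Since a sub-exponential random variable has finite moments of all orders — in particular a finite expectation — the hypotheses $b_\calA = 0$, $v_\calA < \infty$, $\bE[C(\calA)] < \infty$ of Theorem \ref{thm:cost-unbiased} hold, and the sub-exponential branch of Theorem \ref{thm: completion-moment} applies verbatim.

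First I would invoke Theorem \ref{thm:cost-unbiased}: averaging $\lceil v_{\calA_{\uMCMC}}\epsilon^{-2}\rceil$ i.i.d.\ copies of $\calA_{\uMCMC}$ yields an estimator with MSE at most $\epsilon^2$, and the expected total computational cost equals $\bE[C(\calA_{\uMCMC})]\, v_{\calA_{\uMCMC}}\,\epsilon^{-2}$. Because $\bE[C(\calA_{\uMCMC})]$ is a finite constant by sub-exponentiality and $v_{\calA_{\uMCMC}} = \calO(1)$, this is $\calO(\epsilon^{-2})$, giving the total-cost claim.

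Next, for the completion time I would read off the sub-exponential case of Theorem \ref{thm: completion-moment} with $n = \lceil v_{\calA_{\uMCMC}}\epsilon^{-2}\rceil$, noting that in the regime where at least this many processors are available each processor performs a single replication, so that
\[
\bE\Big[\max_{1\le i\le \lceil v_{\calA_{\uMCMC}}\epsilon^{-2}\rceil} C(\calA_{\uMCMC,i})\Big] = \bE[C(\calA_{\uMCMC,1})] + \nu\big(\log\lceil v_{\calA_{\uMCMC}}\epsilon^{-2}\rceil + 1\big),
\]
where $(\nu,\alpha)$ denote the sub-exponential parameters of $C(\calA_{\uMCMC})$ extracted from Proposition \ref{prop: moment, formal}. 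Since $\log\lceil v_{\calA_{\uMCMC}}\epsilon^{-2}\rceil = \calO(\log(1/\epsilon))$, the right-hand side is $\calO(\log(1/\epsilon))$, which is exactly the asserted bound.

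There is no substantive obstacle — the proposition is a packaging of earlier theorems. The only points deserving a word of care are (i) tracking the sub-exponential parameter $\nu$ through Proposition \ref{prop: moment, formal} so that the constant hidden in the $\calO(\log(1/\epsilon))$ completion-time bound is made explicit (it is governed by the geometric decay rate $\kappa$ of $\bP[\tau > n]$), and (ii) confirming that the "sufficiently many processors" regime intended in the statement is the one with at least $\lceil v_{\calA_{\uMCMC}}\epsilon^{-2}\rceil$ processors, so that $n(\epsilon)=1$ and the completion time is a maximum of i.i.d.\ single-run costs rather than a maximum of partial sums.
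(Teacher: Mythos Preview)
Your proposal is correct and matches the paper's approach exactly: the paper simply states that the proposition follows by applying Theorem \ref{thm:cost-unbiased} and Theorem \ref{thm: completion-moment} to the conclusions of Proposition \ref{prop:unbiasedMCMC}, which is precisely what you do. Your additional remarks about tracking the sub-exponential parameter $\nu$ via $\kappa$ and about the ``sufficiently many processors'' regime are accurate elaborations that the paper leaves implicit.
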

Hence, the $\calA_\rMLMC$ algorithm not only attains the same  total cost but also significantly outperforms $\calA_{\MCMC}$ and its direct parallelized variant $\calA_{\epsilon/2}$ in terms of completion time, achieving an $\calO(\log(1/\epsilon))$ completion time.

Finally we comment on the essential assumptions made in Proposition \ref{prop: moment, formal}. First, it is crucial to note that the outcomes of Propositions \ref{prop:unbiasedMCMC} and \ref{prop:cost_completion_uMCMC} remain valid provided the conditions in Proposition \ref{prop: moment, formal} are met at a `qualitative level.' Specifically, the actual implementation of the unbiased MCMC algorithm does not require quantitative knowledge of any of the constants like $\lambda, b, L, C', \kappa$ mentioned in Proposition \ref{prop: moment, formal}. Secondly, these assumptions have a direct relationship with the geometric ergodicity of the MCMC algorithms. Therefore they are known to hold in various scenarios, including random-walk MCMC with Gaussian proposals and targets that have super-exponentially light densities. A wealth of literature supports the validity of these assumptions, such as \cite{mengersen1996rates} for 1-dimensional random-walk Metropolis algorithms, \cite{roberts1996geometric} for multi-dimensional versions, and \cite{roberts1996exponential} for Langevin-type algorithms.

\subsection{Remarks on practical implementation and further extensions}
To complement the preceding theoretical discussion, we offer additional remarks here on practical implementations and potential extensions.

\begin{itemize}
    \item Required knowledge of the parameters: To approximate $\bE_\pi[f]$ with an $\epsilon^2$-MSE using either the conventional MCMC algorithm $\calA_{\MCMC}$ or its direct parallel version $\calA_{\epsilon/2}$, users must have non-trivial knowledge of the chain and may even require an auxiliary algorithm for  uncertainty quantification. In the case of the standard MCMC estimator, this entails estimating $\sigma_f^2$ or creating a running Markov chain confidence interval based on a single chain. Both of these tasks are complex and necessitate special attention, as discussed in \cite{glynn1991estimating, geyer1992practical, flegal2010batch}. For the deployment of $\calA_{\epsilon/2}$, one needs precise quantitative insights—sometimes also referred to as `honest bounds' \citep{jones2001honest}—regarding the rate at which bias diminishes. Achieving this is generally very challenging and akin to accurately bounding the mixing time, particularly in practical applications. Conversely, for methods like the unbiased randomized MLMC, generating confidence intervals and establishing stopping criteria with asymptotic guarantees is comparatively straightforward, as outlined in \cite{glynn1992asymptotic, jacob2020unbiased}.

    \item Difficulty of algorithm design: Creating an effective coupling algorithm for $\calA_{\uMCMC}$ tends to be much more complex than simply executing the standard MCMC method $\calA_{\MCMC}$. Various designs have been put forth for distinct variants, including those cited in \cite{jacob2020unbiased, papp2022new} for Metropolis-Hastings algorithms and Gibbs samplers, \cite{heng2019unbiased} for Hamiltonian Monte Carlo,  \cite{middleton2020unbiased} for pseudo-marginal MCMC, and \cite{ruzayqat2022unbiased,ruzayqat2023unbiased} for continuous-time diffusion processes. This area continues to be a subject of ongoing research. Despite these efforts, there is still a need for the creation of readily usable algorithms.

    \item Topics not covered: Our examination of both biased and unbiased algorithms is presently limited to scenarios where the underlying chain exhibits geometric ergodicity. Nevertheless, these methods can still be applied even when the chain has slower mixing properties, albeit with potential variations in computational expense and completion time. Some insights into the polynomial tail for the coupling time $\tau$ can be found in \cite{middleton2020unbiased}. When it comes to standard MCMC algorithms, we have focused on the no-burn-in or fixed-time burn-in scenarios. If the chain is known to be geometrically ergodic, one could at least theoretically create an $\epsilon$-dependent burn-in approach by eliminating the initial $C\log(1/\epsilon)$ samples and then outputting the subsequent sample. This would yield an estimator with $O(\epsilon)$ bias and complete in $O(\log(1/\epsilon))$ time. Running this algorithm in parallel would result in a completion time comparable to that of $\calA_{\uMCMC}$, albeit with a slightly higher total cost. However, the practicality of this approach is questionable, given that it again requires good estimates on challenging quantities such as the spectral gap.  
\end{itemize}

\section{Numerical analysis: a case study}\label{sec:numerical}
We perform numerical evaluations on the following example using both unbiased and biased approaches. The example is intentionally simple and comes with all the necessary ground truth information, facilitating the verification of our theoretical insights.

Our objective is to estimate the mean of a univariate standard Gaussian using a random-walk Metropolis--Hastings (RWM) algorithm. The proposal distribution is \( Q(z, \cdot) \sim \mathcal{N}(z, \sigma^2) \), and the initial state is \( x_0 = \mathcal{N}(1, 1) \). We adopt a proposal variance of \( \sigma^2 = 2.38^2 \), as suggested in previous studies like \cite{gelman1997weak, christensen2005scaling}. As our unbiased algorithm, we utilize the coupling-based approach detailed in \cite{jacob2020unbiased}. 

Specifically, we run two coupled chains \( \{(Y_t, Z_{t+1})\} \) with identical transition kernels. The initial states are \( Y_0 \sim \mathcal{N}(1, 1) \) and \( Z_0 \sim \mathcal{N}(1, 1) \), and  \( Z_1 \sim \mathcal{N}(Z_0, \sigma^2) \). We employ maximal-reflection coupling for the proposal distributions and use common random numbers for the accept-reject step, as outlined in \cite{jacob2020unbiased, wang2021maximal}. The algorithm concludes when the two chains converge, at a random but almost surely finite time \( \tau \). According to \cite{jacob2020unbiased}, \( Z_0 + \sum_{k=1}^{\tau} Z_{k} - Y_{k-1} \) serves as an unbiased estimator.

To simulate parallel computation, we repeat our coupling algorithm $\calA$ for \( M \) times,  as if it were running on one of \( M \) processors. We then measure the algorithm's performance for each \( M \) ranging from 1 to \( 10^5 \). The completion time \( T(M) \) is recorded as the maximum runtime across all \( M \) repetitions, and the squared error \( \textsf{err}^2(M) \) is calculated as the squared error of the mean estimator over these \( M \) repetitions.

The performance metrics for the unbiased algorithm are summarized in Figure \ref{fig:unbiased_plot}. As indicated in prior research \cite{mengersen1996rates, jacob2020unbiased}, the time it takes for the two chains to couple follows a geometric tail distribution. Consequently, the completion time across \( M \) repetitions should increase no more than logarithmically with \( M \). This behavior is vividly illustrated in the top-left panel of Figure \ref{fig:unbiased_plot}.For instance, the completion time required for 25,000 repetitions is approximately 25, whereas it only slightly exceeds 28 for 100,000 repetitions. Additionally, the top-right panel of the figure clearly demonstrates that the error diminishes exponentially as the completion time increases. (It's worth noting that the vertical axis for squared error is plotted on a logarithmic scale.) This pattern suggests that the completion time required to achieve an \( \epsilon \)-level error should be on the order of \( O(\log(1/\epsilon)) \). Lastly, the two panels at the bottom of the figure depict the decline in error as the number of repetitions \( M \) increases. Theoretically, the expected squared error of the average over \( M \) repetitions should be \( \frac{\text{Var}(\mathcal{A})}{M} \), implying that the logarithm of the squared error should be linearly related to \( \log(M) \) with a slope of -1. This theoretical expectation is empirically confirmed in the bottom-right panel, where the fitted line for \( \log(\textsf{err}^2(M)) \) against \( \log(M) \) has a slope of approximately -1.02.

\begin{figure}[htbp]
    \centering
    \includegraphics[width=\textwidth]{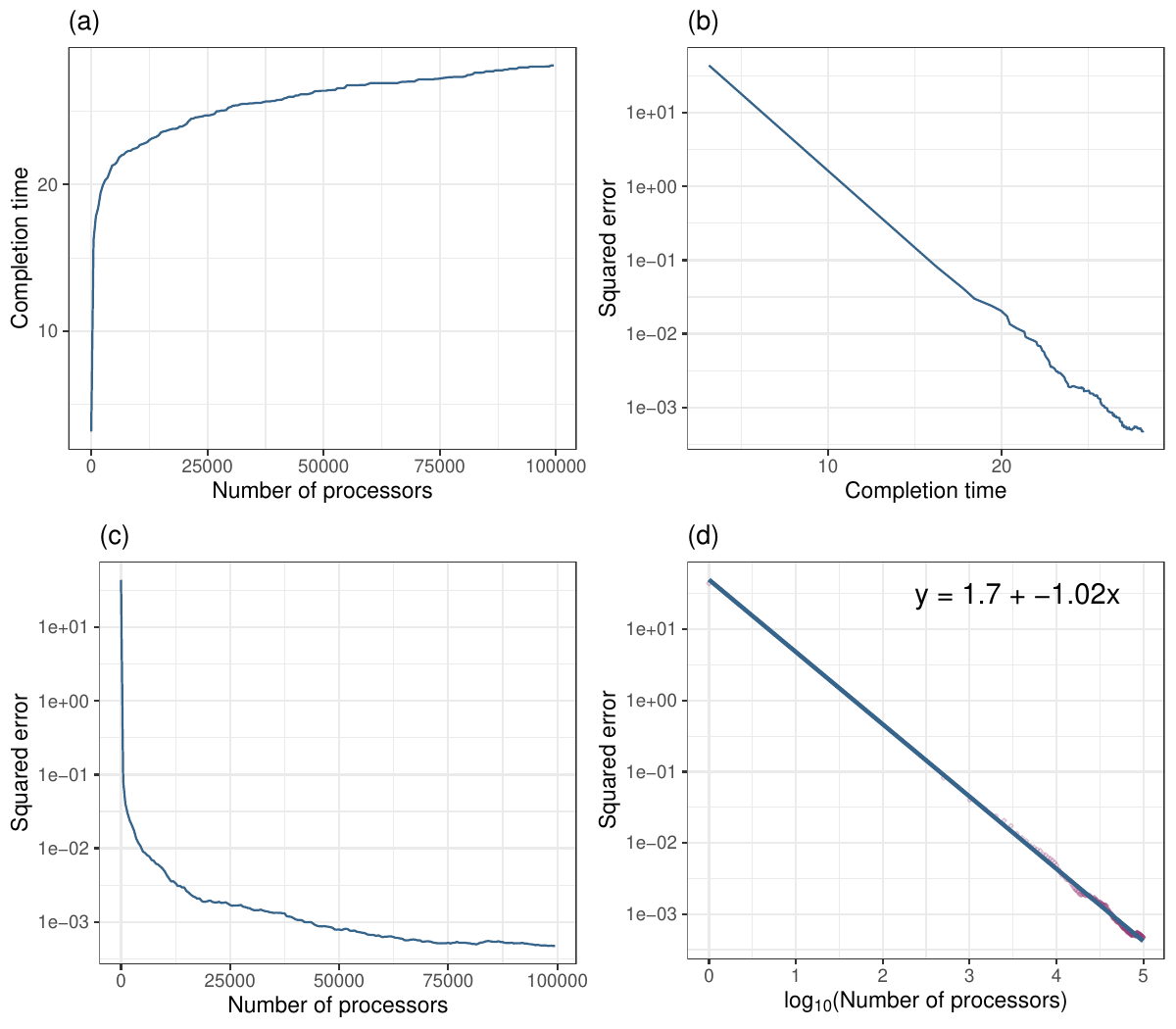}
    \caption{Performance evaluation of the implemented algorithm across $10^5$ processors. Each plot is based on the average results of 100 independent repetitions of the same experiment. (a) The relationship between the number of processors and completion time, demonstrating how parallelization scales. (b) A plot of completion time against the logarithm of squared error. (c) The relationship between the number of processors and the logarithm of squared error. (d) A log-log plot of the number of processors against squared error, with a fitted regression line. }
    \label{fig:unbiased_plot}
\end{figure}

Furthermore, we assess the performance by contrasting the unbiased MCMC method with the standard (biased) MCMC approaches. The unbiased MCMC implementation has been previously detailed. For the standard MCMC methods, we set a specific number of iterations \( N \) (ranging from 1 to 100) and execute the RWM algorithm \( N \) times for \( M \) repetitions (ranging from 1 to \( 10^5 \)), mimicking parallel execution on \( M \) processors. Each combination of \( (N, M) \) yields a specific completion time of \( N \) and a squared error \( \widetilde{\textsf{err}}^2(M, N) \), which is computed as the squared error for the average result over these \( M \) runs. Theoretically, as \( M \) approaches infinity, \( \widetilde{\textsf{err}}^2(M, N) \) should converge to the squared bias inherent to the MCMC algorithm after \( N \) iterations.
\begin{figure}[htbp!]
    \centering
    \includegraphics[width=\textwidth]{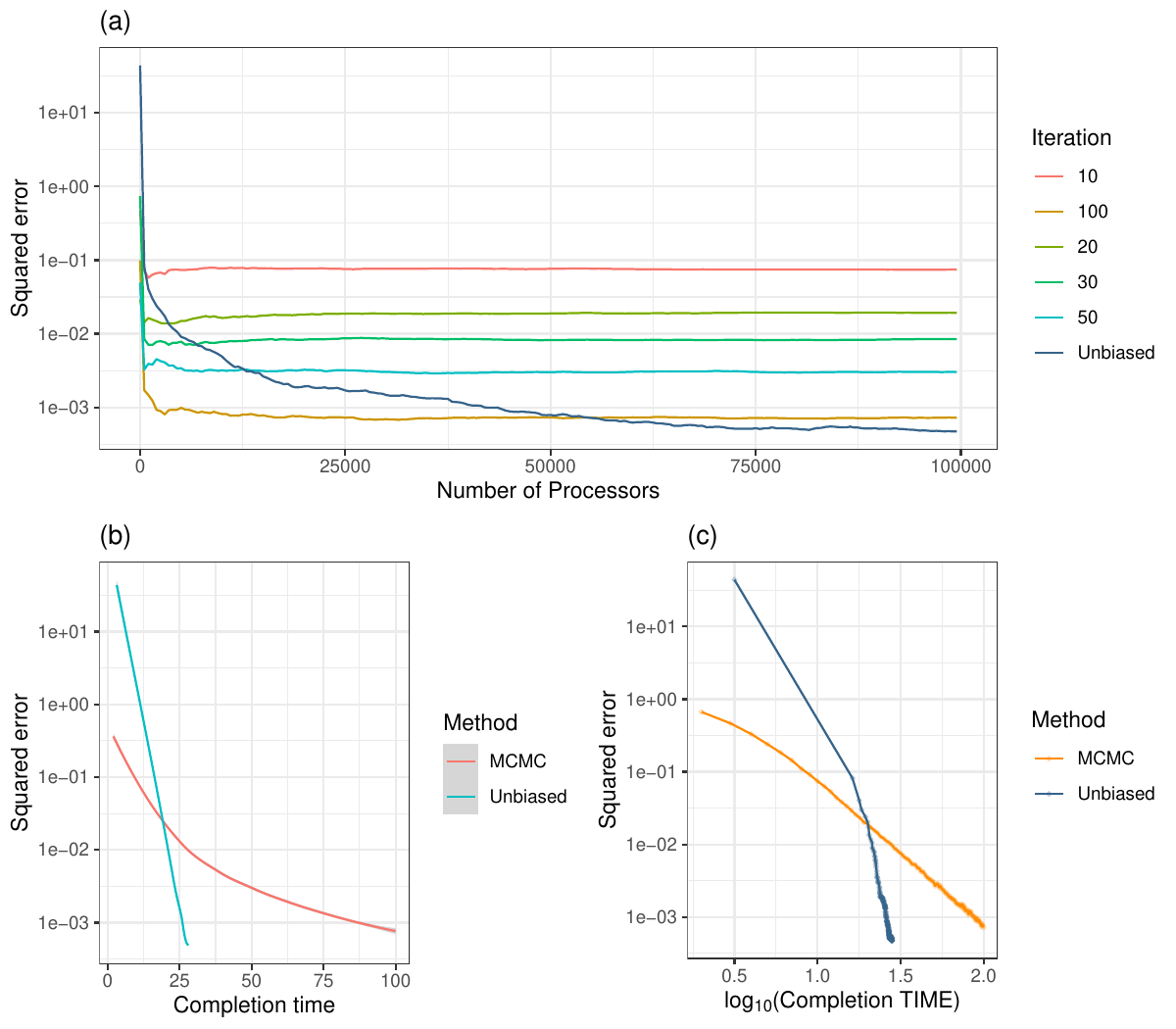}
    \caption{ Comparative Analysis of Unbiased and Biased MCMC Methods. Each plot is based on the average results of 100 independent repetitions of the same experiment.(a) The plot shows how squared error diminishes as the number of processors increases for various methods. The steelblue curve represents the unbiased MCMC method, which has a maximum completion time of \(28.71\). Curves in other colors represent standard MCMC methods conducted for \{10, 20, 30, 50, 100\} iterations. (b) This plot displays the relationship between completion time and squared error for both standard MCMC (in red) and unbiased MCMC (in blue). (c) This plot displays the relationship between the logarithm of completion time and the logarithm of squared error for standard MCMC (in orange) and unbiased MCMC (in steelblue).}
    \label{fig:unbiased_mcmc_comparison}
\end{figure}
The results of the comparative study are presented in Figure \ref{fig:unbiased_mcmc_comparison}. The top panel illustrates the decay of \( \textsf{err}^2(M, N) \) as \( M \) increases for fixed \( N \) values of \( \{10, 20, 30, 50, 100\} \). We have also included the performance of our unbiased method (depicted in steelblue), which reaches a maximum completion time of \( 28.71 \) when employing \( 100,000 \) processors. It is evident that standard MCMC methods are more effective than the unbiased method when the number of processors is limited. This is because a single standard MCMC estimator typically has lower variance, a point also noted in \cite{jacob2020unbiased}. 

However, the advantages of using an unbiased method become increasingly significant as the number of processors increases. Two key observations are worth mentioning here. First, due to its unbiased nature, the error in the unbiased method continues to decline as more processors are utilized. On the other hand, the error in standard MCMC for any fixed number of iterations plateaus after increasing the number of processors, constrained by its inherent bias. Second, when a large number of processors (specifically \( 100,000 \)) are available, the unbiased method, with a maximum completion time of \( 28.71 \), outperforms the standard MCMC method with a completion time of \( 100 \). This further empirically substantiates the benefits of adopting an unbiased approach in a highly parallel computing environment.

To dive deeper into the relationship between completion time and error, we plot the logarithm of the squared error against both the completion time and the logarithm of the completion time for both methods, specifically when \( M = 100,000 \). These plots are displayed in the bottom panels of Figure \ref{fig:unbiased_mcmc_comparison}. The efficacy of the unbiased method in reducing completion time is readily apparent. Furthermore, in Figure \ref{fig:unbiased_mcmc_comparison} (c), we observe that the logarithm of the squared error exhibits near-linear decay for the standard MCMC method and super-linear decay for the unbiased MCMC method. This confirms that the intrinsic bias of the standard MCMC method after \( N \) iterations is on the order of \( 1/N \). In practical terms, to achieve a specific accuracy level \( \epsilon \), the standard MCMC method requires completion time on the order of \( 1/\epsilon \) even when a large number of processors is available. In contrast, the unbiased method necessitates only \( O(\log(1/\epsilon)) \) completion time.

\section{Discussion}\label{sec:discussion}
The insights derived from the comparison between biased and unbiased Monte Carlo methods in parallel computation contexts have clarified instances where unbiased methods can offer a significant advantage and where their advantage may be minimal or nonexistent. Moving forward, unbiased techniques have the potential to be enhanced and developed further through the employment of variance reduction strategies like those discussed in \cite{vihola2018unbiased}, with a focus on diminishing the overall computational expense. Moreover, our present exploration can extend to a wider array of metrics capable of accurately reflecting the practical utility of these methodologies in varied real-world contexts. This will aid practitioners in making informed decisions on the application of these methods. Finally, a promising direction would be the development of hybrid methodologies (see, e.g., \cite{haji2023nested}) that can  leverage the strengths of both unbiased and biased Monte Carlo methods, seeking to strike a balance between computational efficiency, resource utilization, and accuracy. 
	\newpage
\bibliographystyle{chicago}
\bibliography{ref}

\end{document}